\makeatletter \newcommand{\apxthm}[3]{
\newcounter{sec#2} \newcounter{val#2} \newcounter{use#2}
\setcounter{sec#2}{\value{section}} \setcounter{val#2}{\value{theorem}}
\begin{#1} \label{#2} {#3} \end{#1}
\long\@namedef{#2}{
\gdef\thesection{\@arabic\c@section}
\@tempcnta=\value{section} \@tempcntb=\value{theorem}
\setcounter{section}{\value{sec#2}} \setcounter{theorem}{\value{val#2}}
\phantomsection \label{#2*} \begin{#1} {#3} \end{#1} 
\setcounter{section}{\@tempcnta} \setcounter{theorem}{\@tempcntb}
\gdef\thesection{\@Alph\c@section}
}}\makeatother
\newenvironment{algobox}[2][]{
    \begin{center}
    \begin{tcolorbox}[enhanced,title=\centering \large {#2},colback=white,colframe=black!80,width=\textwidth,breakable]
    \underline{\textbf{Code for a party $P_i$}{#1}}
    \begin{algorithmic}[1]
}{
    \end{algorithmic} 
    \end{tcolorbox} 
    \end{center}
}
\algrenewcommand{\algorithmiccomment}[1]{\hfill \textcolor{red}{$\triangleright$ \emph{#1}}}
\makeatletter \newcommand{\algmargin}{\the\ALG@thistlm} \makeatother
\newlength{\whilewidth} \settowidth{\whilewidth}{\algorithmicwhile\ }
\algnewcommand{\parState}[1]{\State \parbox[t]{\dimexpr\linewidth-\algmargin}{\strut #1\strut}}
\makeatletter \ifthenelse{\equal{\ALG@noend}{t}} {\algtext*{EndUpon}} \makeatother
\makeatletter \ifthenelse{\equal{\ALG@noend}{t}} {\algtext*{EndWhen}} \makeatother
\newcommand{\llabel}[1]{\phantomsection\label{line:#1}}
\DeclarePairedDelimiter{\floor}{\lfloor}{\rfloor}
\DeclarePairedDelimiter{\ceil}{\lceil}{\rceil}
\newcommand{\ECHO}{\texttt{ECHO}}
\newcommand{\PROP}{\texttt{PROP}}
\newcommand{\READY}{\texttt{READY}}
\newcommand{\msgpair}[2]{\langle {#1}, {#2} \rangle}
\newcommand{\BO}{\mathcal{O}}
\newcommand{\Term}{\mathsf{Term}}
\newcommand{\TC}{\mathsf{TC}}
\newcommand{\GC}{\mathsf{GC}}
\newcommand{\Exp}{\mathsf{Exp}}
\newcommand{\Agn}{\mathsf{Agr}_\mathsf{N}}
\newcommand{\Agz}{\mathsf{Agr}_\mathsf{Z}}
\newcommand{\TStep}{\mathsf{2\textsf{-}Step}}
\newcommand{\Prop}{\mathsf{Prop}}
\newcommand{\LEFT}{\texttt{LEFT}}
\newcommand{\RIGHT}{\texttt{RIGHT}}
\newcommand{\CENTER}{\texttt{CENTER}}
\newcommand{\KVAL}{\texttt{KVAL}}
\newcommand{\sfnext}{\mathsf{next}}
\title{Asynchronous Approximate Agreement with Quadratic Communication}
\author{Mose {Mizrahi Erbes}}{ETH Zurich, Switzerland}{mmizrahi@ethz.ch}{https://orcid.org/0009-0009-9771-0845}{}
\author{Roger Wattenhofer}{ETH Zurich, Switzerland}{wattenhofer@ethz.ch}{https://orcid.org/0000-0002-6339-3134}{}
\authorrunning{M.\ Mizrahi Erbes and R.\ Wattenhofer}
\keywords{Approximate agreement, byzantine fault tolerance, communication complexity}
\begin{document}

\maketitle

\begin{abstract}
We consider an asynchronous network of $n$ message-sending parties, up to $t$ of which are byzantine. We study approximate agreement, where the parties obtain approximately equal outputs in the convex hull of their inputs. In their seminal work, Abraham, Amit and Dolev [OPODIS '04] solve this problem \linebreak in $\mathbb{R}$ with the optimal resilience $t < \frac{n}{3}$ with a protocol where each party reliably broadcasts a value in every iteration. This takes $\Theta(n^2)$ messages per reliable broadcast, or $\Theta(n^3)$ messages per iteration.

In this work, we forgo reliable broadcast to achieve asynchronous approximate agreement against $t < \frac{n}{3}$ faults with a quadratic communication. In a tree with the maximum degree $\Delta$ and the centroid \linebreak decomposition height $h$, we achieve edge agreement in at most $6h + 1$ rounds with $\mathcal{O}(n^2)$ messages of size $\mathcal{O}(\log \Delta + \log h)$ per round. We do this by designing a 6-round multivalued 2-graded consensus protocol and using it to recursively reduce the task to edge agreement in a subtree with a smaller centroid decomposition height. Then, we achieve edge agreement in the infinite path $\mathbb{Z}$, again with the help of 2-graded consensus. Finally, we show that our edge agreement protocol enables $\varepsilon$-agreement in $\mathbb{R}$ in $6\log_2\frac{M}{\varepsilon} + \mathcal{O}(\log \log \frac{M}{\varepsilon})$ rounds with $\mathcal{O}(n^2 \log \frac{M}{\varepsilon})$ messages and $\mathcal{O}(n^2\log \frac{M}{\varepsilon}\log \log \frac{M}{\varepsilon})$ bits of communication, where $M$ is the maximum non-byzantine input magnitude.
\end{abstract}

\section{Introduction} \label{sec1}

We consider a fully connected asynchronous network of $n$ message-passing parties $P_1,\dots,P_n$. Up to $t$ of these parties are corrupted in a byzantine manner, while the rest are honest.

In an approximate (convex) agreement problem, the parties output approximately equal values in the convex hull of their inputs. The most classical example is approximate agreement in $\mathbb{R}$, where the inputs/outputs are in $\mathbb{R}$, and for some parameter $\varepsilon > 0$ the following hold: \begin{itemize}
    \item \textbf{\emph{validity:}} Each honest party output is between the minimum and \mbox{maximum honest inputs.}
    \item \textbf{\emph{$\bm{\varepsilon}$-agreement:}} If any honest parties $P_i$ and $P_j$ output $y_i$ and $y_j$, then $|y_i - y_j| \leq \varepsilon$.
\end{itemize}

Approximate agreement in $\mathbb{R}$ was introduced in 1985 by Dolev, Lynch, Pinter, Stark and Weihl \cite{dolev86}. Like byzantine agreement, in synchronous networks it is possible against $t < \frac{n}{2}$ faults with setup \cite{glw22}, but only possible when $t < \frac{n}{3}$ if perfect (signature-free) security is desired \cite{dolev86} or the network is asynchronous. What sets approximate agreement apart is that it is determinism-friendly. While deterministic byzantine agreement takes $t + 1$ rounds in synchrony \cite{ds83} and is impossible with just one crash in asynchrony \cite{flp85}, approximate agreement does not \linebreak share these limitations. Thus, approximate agreement protocols are customarily deterministic.

In \cite{dolev86}, Dolev et al.\ achieve $\varepsilon$-agreement in $\mathbb{R}$ with a perfectly secure synchronous protocol secure against $t < \frac{n}{3}$ corruptions. Simplifying things slightly, in their protocol the parties estimate the spread $S$ of their inputs (the maximum difference between any two inputs), and run for $\ceil{\log_2\frac{S}{\varepsilon}}$ rounds. In each round each party sends its value to every other party, and with this the parties halve the diameter of their values. After $\ceil{\log_2\frac{S}{\varepsilon}}$ rounds, the spread is at most $2^{-\ceil{\log_2(S / \varepsilon)}} \leq \frac{\varepsilon}{S}$ of what it initially was, and thus $\varepsilon$-agreement is achieved. They present an asynchronous version of this protocol as well, but only with the resilience $t < \frac{n}{5}$ as the parties can no longer wait to receive the value of each honest party in every iteration.

Asynchronous approximate agreement in $\mathbb{R}$ with the optimal resilience $t < \frac{n}{3}$ was first achieved in 2004 by Abraham, Amit and Dolev \cite{aad04}, with a protocol that consists of $\BO(\log\frac{S}{\varepsilon})$ constant-round iterations. Each iteration involves one witness technique application where each party reliably broadcasts its current value in $\mathbb{R}$ (with a reliable broadcast protocol such \linebreak as Bracha's \cite{b87}), and obtains at least $n - t$ reliably broadcast values. The technique ensures that every two parties obtain the values of at least $n - t$ common parties. Since the technique's introduction in \cite{aad04}, most asynchronous approximate agreement protocols have depended on it. Some examples are \cite{aad04,glw22} for agreement in $\mathbb{R}$, \cite{dsz22,glw23,hm13,vg13} for agreement in $\mathbb{R}^d$ when $d \geq 2$, and \cite{cgwf24,nr19} for agreement in graphs (trees, chordal graphs, cycle-free semilattices).

The witness technique requires the parties to reliably broadcast their inputs. Since reliable broadcast requires $\Omega(n^2)$ messages for deterministic \cite{dr85} or strongly adaptive \cite{a19} security against $t = \Omega(n)$ faults, the witness technique requires $\Omega(n^3)$ messages to be sent. Hence, the optimally resilient approximate agreement protocol of Abraham, Amit and Dolev \cite{aad04} costs $\Theta(n^3)$ messages per iteration. This is the case despite asynchronous approximate agreement being possible with $\Theta(n^2)$ messages per iteration, as demonstrated by the protocol of Dolev et al.\ \cite{dolev86} which suboptimally tolerates $t < \frac{n}{5}$ faults. Therefore, we ask the following question: \emph{Is there an asynchronous approximate agreement protocol that optimally tolerates $t < \frac{n}{3}$ faults with only a quadratic (proportional to $n^2$) amount of communication?}

In this work, we answer this question affirmatively by abandoning the witness technique. First, we achieve edge agreement (a discrete form of approximate agreement) in finite trees \cite{nr19} with the optimal resilience $t < \frac{n}{3}$ via multivalued 2-graded consensus iterations. Then, we extend our protocol to achieve edge agreement in the infinite path $\mathbb{Z}$. Finally, we show that edge agreement in $\mathbb{Z}$ implies $\varepsilon$-agreement in $\mathbb{R}$ by reducing the latter to the former. Our final protocol for $\varepsilon$-agreement in $\mathbb{R}$ takes $6\log_2\frac{M}{\varepsilon} + \BO(\log\log \frac{M}{\varepsilon})$ rounds (where $M$ is the maximum honest input magnitude), with $\BO(n^2)$ messages of \mbox{size $\BO(\log\log\frac{M}{\varepsilon})$ sent per round.}

Our work is inspired by \cite{glw24}, which achieves exact convex agreement in $\mathbb{Z}$ with byzantine agreement iterations in a synchronous network. We instead achieve edge agreement in $\mathbb{Z}$ with graded consensus, which is much simpler than byzantine agreement, especially in asynchronous \linebreak networks. Note though that \cite{glw24} supports large inputs with less communication than us.

\section{Model \& Definitions} \label{sec_model}

We consider an asynchronous network of $n$ message-sending parties $P_1,P_2,\dots,P_n$, fully connected via reliable and authenticated channels. An adversary corrupts up to $t < \frac{n}{3}$ parties, making them byzantine, and these parties become controlled by the adversary. The adversary adaptively chooses the parties it wants to corrupt during protocol execution, depending on the messages sent over the network. If a party is never corrupted, then we call it honest.

The parties do not have synchronized clocks. The adversary can schedule messages as it sees fit, and it is only required to eventually deliver messages with honest senders. If a party sends a message, then the adversary may corrupt the party instead of delivering the message.

We say that a party multicasts $m$ when it sends $m$ to every party. By corrupting a party that is multicasting a message, the adversary may deliver the message to only some parties.

Our protocols are live; i.e.\ they achieve liveness. That is, if the honest parties all acquire inputs and all run forever, then they all output. However, in Section \ref{sec_termination}, we explain a low-cost way to upgrade our live protocols into terminating versions \mbox{that allow the parties to halt.}

To define asynchronous round complexity, we imagine an external clock. If a protocol runs in $R$ rounds, then it is live or terminating, and the time elapsed between when every honest party running the protocol knows its input and when every honest party outputs/terminates is at most $R\Delta$, where $\Delta$ is the maximum honest message delay in the protocol's execution.

To keep mathematical expressions simple, we use the convention that $\log v = 0$ if $v \leq 0$.

\subsubsection*{Edge Agreement in a Tree} In edge agreement in a tree graph $T = (V, E)$, each party $P_i$ acquires an input vertex $v_i \in V$, and outputs a vertex $y_i \in V$. We want the following properties: \begin{itemize}
    \item \textbf{\emph{edge agreement:}} Every two honest output vertices are either equal or adjacent in $T$.
    \item \textbf{\emph{convex validity:}} For every honest output $y$, there exist some (possibly equal) honest inputs $v_y$ and $v'_y$ such that $y$ is on the path which connects $v_y$ and $v'_y$ in $T$.
\end{itemize}

Edge agreement in a tree generalizes edge agreement in a path, which is essentially the same task as approximate agreement in an interval in $\mathbb{R}$, but with the input/output domain restricted to the integers (with adjacent integers representing adjacent path vertices).

\subsubsection*{Graded Consensus} In $k$-graded consensus, each party $P_i$ acquires an input $v_i$ in an input domain $\mathcal{M}$, and outputs some value-grade pair $(y_i, g_i) \in (\mathcal{M} \times \{1, \dots, k\}) \cup \{(\bot, 0)\}$. The following must hold: \begin{itemize}
    \item \textbf{\emph{agreement:}} If any honest parties $P_i$ and $P_j$ output $(y_i, g_i)$ and $(y_j, g_j)$, then $|g_i - g_j| \leq 1$, and if $\min(g_i, g_j) \geq 1$, then $y_i = y_j$.
    \item \textbf{\emph{intrusion tolerance:}} If $(y, g) \neq (\bot, 0)$ is an honest output, then $y$ is an honest input.
    \item \textbf{\emph{validity:}} If the honest parties have a common input $m \in \mathcal{M}$, then they all output $(m, k)$.
\end{itemize}

As \cite{aw24} has noted before, $k$-graded consensus is equivalent to edge agreement in a particular tree when the inputs must all be leaves of the tree.

\begin{figure}[ht]
    \centering
    \begin{tikzpicture}[scale=1]
        \node[] at (1, 0) (a3) {$(a, 2)$};
        \node[] at (2.5, 0) (a2) {$(a, 1)$};
        \node[] at (4, 0) (bot2) {$(\bot, 0)$};
        \node[] at (4, 1) (b2) {$(b, 1)$};
        \node[] at (4, 2) (b3) {$(b, 2)$};
        \node[] at (5.5, 0) (c2) {$(c, 1)$};
        \node[] at (7, 0) (c3) {$(c, 2)$};
        \draw[] (bot2)--(a2)--(a3);
        \draw[] (bot2)--(b2)--(b3);
        \draw[] (bot2)--(c2)--(c3);
    \end{tikzpicture}
    \caption{Edge agreement in a spider tree with the center $(\bot, 0)$ and a path $((m, 1), \dots, (m, k))$ attached to it for each $m \in \mathcal{M}$ is equivalent to $k$-graded consensus when the parties can only have leaf edge agreement inputs, with each leaf input $(m, k)$ \mbox{in bijection with the $k$-graded consensus input $m$.}}\label{gradedfig}
\end{figure}

In this work, we use both binary 2-graded consensus (with $|\mathcal{M}| = 2$) and multivalued 2-graded consensus (with $|\mathcal{M}| > 2$). For the latter, there is a 9-round protocol in the literature that costs $\BO(n^2)$ messages of size $\BO(\log |\mathcal{M}|)$ \cite{aw24}. This protocol is enough for our asymptotic complexity results. However, it achieves a property called binding \cite{aw24} that we do not need, and without this property, 6 rounds suffice. We show this in the \hyperref[GradedConsensus]{appendix} by constructing a family of multivalued $2^k$-graded consensus protocols $\mathsf{GC}_{2^0}, \mathsf{GC}_{2^1}, \mathsf{GC}_{2^2}, \mathsf{GC}_{2^3}, \dots$ which each take $3k + 3$ rounds, with $\BO(n^2)$ messages of size $\BO(\log k + \log |\mathcal{M}|)$ per round. We obtain this family by constructing a $1$-graded consensus protocol $\mathsf{GC}_1$, \mbox{and by repeatedly grade-doubling it.}\footnote{Repeated grade-doubling is a standard method to achieve $2^k$-graded consensus in $\BO(k)$ rounds \cite{fmll21, delphi, simon25}, though note that synchronous networks allow $k^k$-graded consensus in $\BO(k)$ rounds \cite{ggl22}.}

\section{Overview \& Contributions} \label{sec_overview}

Our first contribution is a protocol for edge agreement in finite trees. Against byzantine faults, this problem was first studied by Nowak and Rybicki \cite{nr19}. It generalizes both edge agreement in finite paths (the discrete version of $\varepsilon$-agreement in $[0, 1]$) and graded consensus. 

Nowak and Rybicki achieve edge agreement in a finite tree $T = (V, E)$ of diameter $D$ with $\ceil{\log_2 D} + 1$ constant-round witness technique iterations and thus $\Theta(n^3\log D(\log |V| + \log n))$ bits of communication, where the $\log n$ term arises from the party IDs that identify each witness technique reliable broadcast's sender party. Meanwhile, we achieve edge agreement with at most $h(T)$ iterations ($6h(T) + 1$ rounds), where $h(T)$ is a tree property we formally define in Section \ref{sec_tree} that upper bounds the height of $T$'s centroid decompositions \cite{centroid-decomp}. The integer value $h(T)$ can be anywhere in $[\floor{\log_2 D}, \ceil{\log_2|V|}]$, which means that our protocol's round complexity is for some trees (though not for spider trees, trees with $\BO(D)$ vertices such as paths etc.) worse than Nowak and Rybicki's. However, our iterations only cost $\BO(n^2)$ messages, each of size most $\BO(\log \Delta + \log(h(T)))$ where $\Delta$ is $T$'s maximum degree. So, our protocol requires roughly $n$ times less communication when $h(T) \approx \log_2 D$.

In Section \ref{sec_tree}, we present a parametrized recursive protocol $\TC(T)$ that achieves edge agreement in any given finite tree $T$. On a high level, it works as follows:
\begin{enumerate}
    \item If $T$ has 1 or 2 vertices, then each party outputs its input vertex. This is the base case.
    \item If $T$ has $s \geq 3$ vertices, then the parties let $\sigma$ be a centroid vertex of $T$ (whose deletion from $T$ results in a forest whose components all have at most $s/2$ vertices), and let $w_1, \dots, w_d$ be $\sigma$'s neighbors sorted by vertex index. Then, they run 2-graded consensus, where each party's input is either $\sigma$ (if its edge agreement input is $\sigma$) or some neighbor $w_k$ of $\sigma$ (if its edge agreement input is in $H_k$, which is how we refer to the tree component of $T \setminus \{\sigma\}$ that contains $w_k$). In the parties reach consensus on $\sigma$, then they output $\sigma$. Otherwise, if they reach consensus on some neighbor $w_k$ of $\sigma$, then the parties with input vertices outside $H_k$ adopt the new input $w_k$, and we reduce \mbox{the task to edge agreement in the subtree $H_k$.}
\end{enumerate}

There is a snag. The explanation above only works if the parties actually reach unanimous agreement on either $\sigma$ or one of its neighbors $w_k$. However, 2-graded consensus does not guarantee this, since some parties might output $(\bot, 0)$ from it. What allows us to overcome this issue is that if anybody outputs $(\bot, 0)$, then the parties all learn that they ran 2-graded consensus with differing inputs, and thus learn that $\sigma$ is a \mbox{safe output vertex w.r.t\ convex validity.}

Our approach for finite trees above corresponds to binary search when the tree is a path. For example, the parties reach edge agreement in the path $(0, \dots, 8)$ by either directly agreeing on $4$, or by reducing the problem to edge agreement in either $(0, \dots 3)$ or $(5, \dots, 8)$.  Binary search does not support the infinite path $\mathbb{Z}$. Fortunately, 2-graded consensus also enables exponential search. In Section \ref{sec_z}, we present a protocol for edge agreement in $\mathbb{Z}$ that on a high level works as follows (with some complications that we skip over for now due to our use of 2-graded consensus instead of byzantine agreement).\begin{enumerate}
    \item First, the parties reach 2-graded consensus on whether they prefer to agree on the left path $(\dots, -1, 0)$ or the right path $\mathbb{N} = (0, 1, \dots)$, with each party preferring $\mathbb{N}$ iff its input is in $\mathbb{N}$. Below, we explain what the parties then do if they decide to agree in $\mathbb{N}$. Otherwise, they follow the same steps, but with mirrored \mbox{(sign-flipped) inputs and outputs.}
    \item The parties run exponential search with the phases $k = 0, 1, \dots$; where in each phase $k$ they reach 2-graded consensus on if they have inputs in the left path $(2^k - 1, \dots, 2^{k+1} - 1)$ or the right path $(2^{k+1}, \dots)$. If they decide on the left path, then they reach edge agreement \linebreak in it using our protocol for edge agreement in finite trees. Otherwise, if they decide on the right path, then they increment the \mbox{phase counter $k$ and continue exponential search.}
    \item Instead of directly using exponential search for edge agreement in $\mathbb{N}$, we take inspiration from \cite{bentley76}, and design a two-stage protocol that is asymptotically twice as round-efficient. Roughly speaking, the parties run the protocol we described above based on exponential search to approximately agree on some $k$ such that the path $(2^k - 1, \dots, 2^{k+1} - 1)$ contains safe output values w.r.t\ convex validity, and then they reach edge agreement in this path.
\end{enumerate}

When the maximum honest input magnitude is $M$, our protocol for edge agreement in $\mathbb{Z}$ takes $6\log_2 M + \BO(\log \log M)$ rounds, with $\BO(n^2)$ messages of size $\BO(\log \log M)$ per round. In Section \ref{sec_reduction}, we reduce $\varepsilon$-agreement in $\mathbb{R}$ to edge agreement in $\mathbb{Z}$ to show that this implies $\varepsilon$-agreement in $\mathbb{R}$ in $6\log_2\frac{M}{\varepsilon} + \BO(\log \log \frac{M}{\varepsilon})$ rounds with $\BO(n^2\log \frac{M}{\varepsilon})$ messages and $\BO(n^2\log \frac{M}{\varepsilon}\log \log \frac{M}{\varepsilon})$ bits of communication in total. The reason why we use this reduction instead of directly solving $\varepsilon$-agreement is that using the reduction makes termination easier. Note that factor 6 in the round complexity is due to us using our 6-round 2-graded consensus protocol. If for any other $r$ we used a $r$-round protocol instead, like the 2-round 2-graded consensus protocol in \cite{aw24} that tolerates $t < \frac{n}{5}$ faults, then this factor would be $r$ instead.

In terms of message and communication (though not round) complexity, our protocol for $\varepsilon$-agreement in $\mathbb{R}$ is more efficient than that of Abraham et al.\ \cite{aad04}, who achieve $\varepsilon$-agreement in $\mathbb{R}$ with $\BO(\log\frac{S}{\varepsilon})$ constant-round witness technique iterations (where $S$ is the honest input spread, i.e.\ the maximum difference of any honest inputs), and \mbox{with $\Theta(n^3\log \frac{S}{\varepsilon})$ messages in total.}

Another notable protocol is Delphi, by Bandarupalli, Bhat, Bagchi, Kate, Liu-Zhang and Reiter \cite{delphi}. To efficiently achieve $\varepsilon$-agreement with $\ell$-bit inputs in $\mathbb{R}$, they assume an input distribution (normal distribution for the following), and when the honest input spread is $S$ they achieve $\varepsilon$-agreement except with probability $2^{-\lambda}$ in $\BO(\log(\frac{S}{\varepsilon}\log\frac{S}{\varepsilon}) + \log(\lambda\log n))$ rounds with $\BO(\ell n^2 \frac{S}{\varepsilon}(\log(\frac{S}{\varepsilon}\log\frac{D}{\varepsilon}) + \log(\lambda\log n)))$ bits of communication, while relaxing validity by allowing outputs outside the range of the honest inputs by at most $S$. They use the security parameter $\lambda$ here to assume bounds on $S$ that hold except with $2^{-\lambda}$ probability thanks to their input distribution assumptions. In comparison, we achieve $\varepsilon$-agreement in $\mathbb{R}$ without relaxing validity or assuming any input bounds. As Table \ref{comparisons} shows, our protocol is also more efficient, in particular since Delphi requires a cubic amount of \mbox{communication per round when $S \geq n \cdot \varepsilon$.}

\begin{table}[ht] \begin{threeparttable}
    \renewcommand{\tnote}[1]{\textsuperscript{#1}}
    \newcolumntype{C}{>{\centering\arraybackslash}c}
    \newcolumntype{Y}{>{\centering\arraybackslash}X}
    \caption{Comparison of protocols for asynchronous $\varepsilon$-agreement in $\mathbb{R}$ when the parties have inputs in $[0, 1]$. If $v_\mathsf{lo}$ and $v_\mathsf{hi}$ are the minimum and maximum honest inputs, then $S = v_\mathsf{hi} - v_\mathsf{lo}$ and $M = v_\mathsf{hi}$. To make the comparisons simple, we assume for \cite{dolev86}, \cite{aad04} and \cite{delphi} that the inputs are all multiples of $\varepsilon$.}
    \begin{tabularx}{1\columnwidth}{|C|C|C|C|Y|} \hline
        \textbf{Threshold} & \textbf{Bits Sent / Round} & \textbf{Round Complexity} & \textbf{Relaxation}\tnote{a} & \textbf{Source}
        \\\hline
        $t < \frac{n}{5}$ & $\BO(n^2\log\frac{1}{\varepsilon})$ & $\ceil{\log_2\frac{1}{\varepsilon}}$ & 0 & \cite{dolev86}
        \\\hline
        $t < \frac{n}{3}$ & $\BO(n^3\log\frac{n}{\varepsilon})$\tnote{b} & $\BO(\log\frac{S}{\varepsilon})$ & 0 & \cite{aad04}
        \\\hline
        $t < \frac{n}{3}$ & $\BO(n^2\min(\frac{S}{\varepsilon}, n\log\frac{1}{\varepsilon}))$ & $\BO(\log(\frac{\log(1/\varepsilon)\min(1/\varepsilon, n)}{\varepsilon}))$ & $S$ & \cite{delphi}
        \\\hline
        \textcolor{blue}{$t < \frac{n}{3}$} & \textcolor{blue}{$\BO(n^2\log\log\frac{M}{\varepsilon})$\tnote{c}} & \textcolor{blue}{$\BO(\log \frac{M}{\varepsilon})$} & \textcolor{blue}{0} & \textcolor{blue}{this work}
        \\\hline
    \end{tabularx}
    \begin{tablenotes}
        \item[a] The relaxation is how far an honest output is allowed to be from the honest input range $[v_\mathsf{lo}, v_\mathsf{hi}]$.
        \item[b] The first few rounds of \cite{aad04} estimate the spread $S$, and this costs $\Theta(n^4\log\frac{1}{\varepsilon})$ bits of communication. However, this can be reduced to $\Theta(n^3\log\frac{n}{\varepsilon})$ with modern reliable broadcast protocols \cite{chen24}.
        \item[c] The $\log\log\frac{M}{\varepsilon}$ factor here is for tags that distinguish messages sent in different protocol iterations.
    \end{tablenotes}
    \label{comparisons}
\end{threeparttable} \end{table}

\section{Edge Agreement in a Tree} \label{sec_tree}
In $\mathbb{R}^d$, a set $Z \subseteq \mathbb{R}^d$ is straight-line convex if for all $z_1, z_2 \in Z$ it contains the line segment (the shortest path in $\mathbb{R}^d$) that connects $z_1$ and $z_2$. This definition translates to convexity on a tree $T = (V, E)$, where a set $Z \subseteq V$ is convex if for all $z_1, z_2 \in Z$ the set $Z$ contains all the vertices on the shortest path (the only path) between $z_1$ and $z_2$ \cite{nr19, cgwf24}. Hence, we can define convex hulls on $T$, where the convex hull $\langle Z \rangle$ of any vertex set $Z \subseteq V$ is the set that consists of the vertices in $Z$ and every other vertex $v \in V$ that \mbox{is on the path between some $z_1, z_2 \in V$.}

For edge agreement, the convex validity property is that when the honest parties have the set of inputs $X$, they obtain outputs in $\langle X \rangle$. To achieve this, we rely on the following fact: \begin{proposition} \label{convexprop}
    For every tree $T = (V, E)$, $Z \subseteq V$ and $Y \subseteq \langle Z \rangle$ \mbox{it holds that $\langle Y \rangle \subseteq \langle Z \rangle$.}
\end{proposition}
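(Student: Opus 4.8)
The plan is to show that $\langle\cdot\rangle$ is idempotent on subsets of $V$, i.e.\ that $\langle Z \rangle$ is itself convex, and then to conclude by monotonicity. The monotonicity step is immediate: any $w \in \langle Y \rangle$ lies on the path connecting some $y_1, y_2 \in Y$ (with $y_1 = y_2 = w$ in the degenerate case $w \in Y$), and since $Y \subseteq \langle Z \rangle$, once we know $\langle Z \rangle$ is convex this entire path — and in particular $w$ — lies in $\langle Z \rangle$. So everything reduces to the claim that $\langle Z \rangle$ is convex.

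To prove that $\langle Z \rangle$ is convex, I would fix $a, b \in \langle Z \rangle$ and an arbitrary vertex $v$ on the path between $a$ and $b$ in $T$, and show $v \in \langle Z \rangle$, arguing by contradiction from the assumption $v \notin \langle Z \rangle$. The key structural fact about trees is that a vertex $v$ lies on the path between $x$ and $y$ if and only if $v \in \{x, y\}$ or $x$ and $y$ lie in distinct connected components of $T - v$; equivalently, each connected component of $T - v$ is a subtree, and the unique path between two vertices of the same component stays inside that component. Since $v \notin \langle Z \rangle$, taking a single vertex of $Z$ shows $v \notin Z$, and taking pairs of vertices of $Z$ shows that $v$ separates no two vertices of $Z$; hence all of $Z$ lies in a single connected component $C$ of $T - v$.

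Now I would propagate this to $a$ and $b$. Because $a \in \langle Z \rangle$, either $a \in Z \subseteq C$, or $a$ lies on the path between two vertices $z_1, z_2 \in Z$; in the latter case that path lies entirely within the subtree $C$, since both its endpoints are in $C$, so again $a \in C$. The same reasoning gives $b \in C$. But then the path between $a$ and $b$ lies within $C$ and thus cannot contain $v$, contradicting the choice of $v$. This establishes that $\langle Z \rangle$ is convex, and combining this with the monotonicity observation completes the proof.

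The only real obstacle is the tree-structural fact underlying the whole argument — that ``lying on a path'' coincides with ``being a separator'', and that the components of $T - v$ are connected subtrees whose internal paths avoid $v$ (this is exactly what fails for general graphs, where convex hulls need not be convex). This is standard, so I expect the write-up to be short; the main care is in phrasing it cleanly and applying it uniformly to $Z$, to $a$, and to $b$.
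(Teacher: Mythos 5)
Your proof is correct and arrives at the same key fact as the paper, namely that $\langle Z \rangle$ is convex, equivalently that $T[\langle Z \rangle]$ is a subtree of $T$, after which $\langle Y \rangle \subseteq \langle Z \rangle$ is immediate because paths between vertices of a subtree stay in that subtree. The only difference is in how this key fact is established: the paper asserts directly that $T[\langle Z \rangle]$ is ``a connected union of paths from $T$'' and hence a tree, while you prove the equivalent convexity claim by contradiction via a separator argument on $T - v$; the substance and structure of the two arguments are the same, with yours spelling out a justification the paper leaves as an observation.
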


\begin{proof}
Observe that for any tree $T = (V, E)$ and any $Z \subseteq V$, the graph $T[\langle Z \rangle]$ (the subgraph of $T$ induced by $\langle Z \rangle$) is a connected union of paths from $T$, which makes $T[\langle Z \rangle]$ a tree itself. For any $Y \subseteq \langle Z \rangle$, the tree $T[\langle Z \rangle]$ contains every $y_1, y_2 \in Y$, and as $T[\langle Z \rangle]$ is a tree it also contains the path which connects $y_1$ and $y_2$. That is, $T[\langle Z \rangle]$ contains every vertex in $\langle Y \rangle$.
\end{proof}

Every finite tree $T$ has a set of \emph{centroid} vertices such that if one deletes a centroid vertex $\sigma$ from $T$, then every component tree of the resulting forest $T \setminus \{\sigma\}$ has at most half as many vertices as $T$ \cite{centroid-decomp}. Note that a centroid cannot be a leaf if $T$ has 3 or more vertices, since deleting a leaf from a tree leaves behind a connected tree with only one less vertex.

For any finite tree $T$, one can recursively define a centroid decomposition of $T$ to be a rooted tree $T'$ with the following properties (visualized in \cite{centroid-decomp}):\begin{itemize}
    \item The root of $T'$ is a centroid vertex $\sigma$ of $T$.
    \item If in $T$ the centroid $\sigma$ has exactly $d$ neighbors $w_1, \dots, w_d$, then in $T'$ the root $\sigma$ has exactly $d$ child subtrees $W_1, \dots, W_d$, such that for all $k \in \{1, \dots, d\}$ the child subtree $W_k$ is a centroid decomposition of the tree component that contains $w_k$ in the forest $T \setminus \{\sigma\}$.
\end{itemize}

Finally, let us define the \emph{centroid decomposition height} $h(T)$ of a finite tree $T = (V, E)$ to be the maximum height of any centroid decomposition of $T$. The recursive definition of a centroid decomposition above allows one to prove by induction that $h(T) \leq \ceil{\log_2 |V|}$ (where $|V|$ is the number of vertices), and this bound is tight if $T$ is a path and $|V|$ is a power of $2$. However, some trees have low centroid decomposition heights despite having many vertices. For example, if $T$ is a star, then $h(T) = 1$, no matter how many vertices $T$ has.

Below, we present a recursive protocol $\TC(T)$ based on centroid decomposition for edge agreement in a finite tree $T$. If $T$ has at most two vertices, then each party $P_i$ just outputs its input vertex. Otherwise, the parties let $\sigma$ be the minimum-index centroid of $T$, and run our 2-graded consensus protocol $\GC_2$ to either directly output $\sigma$, or to reduce edge agreement in $T$ to edge agreement in a component tree $T'$ of $T \setminus \{\sigma\}$, handled with a recursive $\TC(T')$ instance. The recursion depth is at most $h(T)$ since each recursive call represents a step from a vertex to its child in a centroid decomposition \mbox{of $T$ whose height is upper bounded by $h(T)$.}

\begin{algobox}[\textbf{ with the input} $(v_i)$]{Protocol $\TC(T)$}
    \renewcommand{\thempfootnote}{\fnsymbol{mpfootnote}}
    \setcounter{mpfootnote}{1}
    \If{$T$ has 1 or 2 vertices}
        \State output $v_i$ and do not run the rest of the protocol
    \EndIf
    \State let $\sigma$ be the minimum-index centroid vertex of $T$, of degree $d \geq 2$
    \State let $w_1, \dots, w_d$ be the neighbors of $\sigma$, sorted by vertex index
    \State let $H_j$ be the tree component in $T \setminus \{\sigma\}$ that contains $w_j$ for each $j \in \{1, \dots, d\}$
    \State run an instance of $\GC_2$ with the other parties where your input is $0$ if $v_i = \sigma$, and otherwise your input is the unique index $j$ such that $v_i \in H_j$
    \State wait until you output some $(k, g)$ from $\GC_2$\llabel{tc-gc-wait}
    \If{$k = 0$}
        \State output $\sigma$ \llabel{tc-direct-sigma}
    \ElsIf{$g \geq 1$}
        \State let $v_i^{\sfnext} \gets v_i$ if $g_i = 2 \land v_i \in H_k$, and let $v_i \gets w_k$ otherwise \llabel{tc-set-v}
        \State let $T_i^{\sfnext} \gets H_k$ \llabel{tc-set-t}
        \State if $g = 1$, then multicast $\msgpair{\KVAL}{k}$
    \Else
        \State output $\sigma$
        \State multicast $\CENTER$
        \parState{wait until you have received the message $\msgpair{\KVAL}{k}$ for some $k$ from $t + 1$ parties, and let $(v_i^{\sfnext}, T_i^{\sfnext}) \gets (w_k, H_k)$ when this happens}\llabel{tc-kval-wait}
    \EndIf
    \State run an instance of $\TC(T_i^{\sfnext})$ with the other parties where your input \nolinebreak is $v_i^{\sfnext}$
    \State if $g_i \geq 1$, then when you output $y$ from the $\TC(T_i^{\sfnext})$, output $y$ from $\TC(T)$
    \Upon{receiving $\CENTER$ from $t + 1$ parties}
        \State output $\sigma$ if you haven't output before and ignore future output commands\llabel{tc-out-center}\hspace{2em}       
        \Comment{Note that $P_i$ should run Line \ref{line:tc-out-center} as soon as it receives $\CENTER$ from $t + 1$ parties, even if it is waiting for something else, e.g.\ waiting to output from $\GC_2$ on Line \ref{line:tc-gc-wait}}
    \EndUpon
\end{algobox}

The idea behind $\TC(T)$ when $T$ has $3$ or more vertices is that either there exists some component $H_k$ of $T \setminus \{\sigma\}$ that contains every honest input vertex $v_i$, or there is no such \nolinebreak vertex. \begin{itemize}
    \item In the former case where there is such a vertex, every honest party $P_i$ runs $\GC_2$ with the input $k$, outputs $(k, 2)$ from it, lets $(v_i^\sfnext, T_i^\sfnext) = (v_i, H_k)$, and obtains its final output from a recursive $\TC(H_k)$ instance which it runs with the input $v_i^\sfnext = v_i$. Thus, edge agreement in $T$ is reduced to edge agreement in $H_k$, which the parties reach via $\TC(H_k)$.
    \item In the latter case where there is no such vertex, there are some honest inputs $v_i$ and $v_j$ such that either $\sigma \subseteq \{v_i, v_j\}$ or $v_i$ and $v_j$ are in different components of $T \setminus \{\sigma\}$. So, $\sigma$ is on the path which connects $v_i$ and $v_j$ in $T$, which makes it a safe output vertex w.r.t.\ convex validity. Moreover, if some honest party $P_i$ outputs $(k, g)$ from $\GC_2$ for some $k \not \in \{\bot, 0\}$ and either $g = 1$ or $P_i$'s $\TC(T)$ input $v_i$ is not in $H_k$, then some but not all of the honest parties have inputs in $H_k$, which means that $w_k$ is a safe output vertex w.r.t.\ convex validity since it is incident to the edge that connects $H_k$ with the rest of the tree. With these in mind, we assign each $\GC_2$ output a behavior such that no matter which two adjacent $\GC_2$ outputs the parties settle on, \mbox{they behave in a compatible manner that leads to edge agreement.}
\end{itemize}

\begin{figure}[ht]
    \centering
    \begin{tikzpicture}[scale=0.91]
        \draw [rounded corners=6mm,fill=gray!20] (-4.75,-{sqrt(3)/4})--(-3,{sqrt(3)*1.5})--(-1.25,-{sqrt(3)/4})--cycle;
        \draw [rounded corners=6mm,fill=gray!20] (4.75,-{sqrt(3)/4})--(3,{sqrt(3)*1.5})--(1.25,-{sqrt(3)/4})--cycle;
        \draw [rounded corners=6mm,fill=gray!20] (-1.75,{1.5+sqrt(3)*1.25})--(0,{1.5-sqrt(3)/2})--(1.75,{1.5+sqrt(3)*1.25})--cycle;
        \node[label=below:\Large $H_1$] at (-3, {sqrt(3)*0.625}) (h1) {};
        \node[label=below:\Large $H_3$] at (3, {sqrt(3)*0.625}) (h3) {};
        \node[label=above:\Large $H_2$] at (0, {1.5+sqrt(3)*0.375}) (h2) {};
        \node[draw, circle, minimum size=6mm, label=center:$\sigma$] at (0, 0) (sigma) {};
        \node[draw, circle, minimum size=6mm, label=center:$w_1$] at (-2, 0) (w1) {};
        \node[draw, circle, fill] at (-4, 0) (w11) {};
        \node[draw, circle, fill] at (-3, {sqrt(3)}) (w12) {};
        \node[draw, circle, minimum size=6mm, label=center:$w_3$] at (2, 0) (w3) {};
        \node[draw, circle, fill] at (4, 0) (w31) {};
        \node[draw, circle, fill] at (3, {sqrt(3)}) (w32) {};
        \node[draw, circle, minimum size=6mm, label=center:$w_2$] at (0, 1.5) (w2) {};
        \node[draw, circle, fill] at (-1, {1.5+sqrt(3)}) (w21) {};
        \node[draw, circle, fill] at (1, {1.5+sqrt(3)}) (w22) {};
        \draw[] (w11)--(w1)--(sigma)--(w3)--(w31);
        \draw[] (w12)--(w1);
        \draw[] (w32)--(w3);
        \draw[] (w21)--(w2)--(sigma);
        \draw[] (w22)--(w2);
    \end{tikzpicture}
    \caption{Let $T$ be the tree depicted above with the unique centroid $\sigma$ and the subtrees $H_1$, $H_2$, $H_3$ (the components of $T \setminus \{\sigma\}$). If the parties run $\TC(T)$, then each party with the input $\sigma$ runs $\GC_2$ with the input $0$, and each party $P_i$ with an input $v_i \neq \sigma$ lets its $\GC_2$ input be the index $k \in \{1, 2, 3\}$ such that $v_i \in H_k$. If for some $k \in \{1, 2, 3\}$ every input $v_i$ is in $H_k$, then every party runs $\mathsf{GC}_2$ with the input $k$ and thus outputs $(k, 2)$ from it. Otherwise, $\sigma$ \mbox{is in the convex hull of the input vertices $v_i$.}}
\end{figure}

\begin{theorem} \label{tc-security}
    For any finite tree $T$, suppose the honest parties run $\TC(T)$ with input vertices in $T$. Then, they reach edge agreement in $T$ based on their input vertices in at most $6h(T)$ rounds if they have a common input vertex, and in at most $6h(T) + 1$ rounds otherwise.
\end{theorem}

\begin{proof}
    Below, we show for any finite tree $T$ that if $\TC(H)$ works well (in accordance with the theorem) for every tree $H$ such that $h(H) < h(T)$, then $\TC(T)$ also works well. So, by strong induction on $h(T)$, $\TC(T)$ works well for \mbox{every finite tree $T$, no matter what $h(T) \geq 0$ is.}

    In the base case where $T$ has at most $2$ vertices, the parties reach edge agreement in $0$ rounds by outputting their inputs. This is what happens when $h(T) = 0$, as $h(T) = 0$ iff $T$ is a vertex. For the rest of the proof, suppose $T$ has at least $3$ vertices, which implies $h(T) \geq 1$.

    Let $\sigma$ be the minimum-index centroid vertex of $T$, with the neighbors $w_1, \dots, w_d$ sorted by vertex index and the corresponding tree components $H_1, \dots, H_d$ of $T \setminus \{\sigma\}$ such that $H_j$ contains $w_j$ for all $j$. Observe that $h(T)$ is by definition greater than $h(H_j)$ for all $j$ since $T$ has some centroid decomposition $T'$ with the root vertex $\sigma$ such that $\sigma$ has $d$ child subtrees $W_1, \dots, W_d$ in $T'$, where each subtree $W_j$ is a centroid decomposition of $H_j$ of height $h(H_j)$. So, our inductive assumption tells us \mbox{that $\TC(H_j)$ works well for all $j \in \{1, \dots, d\}$.}

    First, let us consider the simpler scenario, which is when there exists some $k \in \{1, \dots d\}$ such that the subtree $H_k$ contains every honest input vertex. In this scenario, the honest parties all run $\GC_2$ with the input $k$, and thus all output $(k, 2)$ from it. Then, each honest party $P_i$ sets $(v_i^\sfnext, T_i^\sfnext) = (v_i, H_k)$, runs an instance of $\TC(T_i^\sfnext) = \TC(H_k)$ with the other parties where its input is $v_i^\sfnext = v_i$, and obtains its final output from this recursive $\TC(H_k)$ instance. Therefore, edge agreement in $T$ follows from edge agreement in $H_k$, which the parties reach via $\TC(H_k)$. The round complexity of $\TC(T)$ here is that of $\GC_2$ and $\TC(H_k)$ added together; which is always at most $6 + 6h(H_k) + 1 \leq 6h(T) + 1$, and is at most $6 + 6h(H_k) \leq 6h(T)$ if the honest parties run $\TC(T)$ (and thus $\TC(H_k)$) with a common input. Note that the honest parties do not output $\sigma$ on Line \ref{line:tc-out-center} in this scenario since none of them obtain the grade $0$ from $\GC_2$, which means that no honest party multicasts $\CENTER$ and thus that no honest party receives the message $\CENTER$ from $t + 1$ parties.

    Now, let us consider the more complicated scenario where none of the subtrees $H_1,\dots,H_d$ contains every honest input vertex. Then, there exist some distinct honest inputs $v_i$ and $v_j$ such that either $\sigma \in \{v_i, v_j\}$, or $v_i$ and $v_j$ are in different components of $T \setminus \{\sigma\}$. In either case, $\sigma$ is on the path that connects $v_i$ and $v_j$ in $T$, which means that $\sigma$ is in the convex hull of the honest inputs. In addition, if for some $k \in \{1, \dots, d\}$ an honest party $P_i$ outputs $(k, g)$ from $\GC_2$ and either $g = 1$ or $P_i$'s input $v_i$ is not in $H_k$, then some but not all of the honest parties have inputs in $H_k$ (the ``some'' part by $\GC_2$'s intrusion tolerance and the ``not all'' part by either $P_i$'s grade being below $2$ or by $P_i$'s input not being in $H_k$), and this places $w_k$ in the convex hull of the honest inputs since $w_k$ is on every path in $T$ that connects the honest inputs in $H_k$ with the honest inputs outside $H_k$. With these in mind, let us consider all the ways a $\TC(T)$ execution can go, depending on the honest parties' $\GC_2$ outputs.
    \begin{itemize}
        \item It could happen that the honest parties all output $(k, 2)$ or $(k, 1)$ from $\GC_2$ for some $k \in \{1, \dots, d\}$, with at least one outputting $(k, 2)$. This situation is similar to the one we considered previously, except for the fact that some honest parties $P_i$ (those with inputs outside $H_k$ and those with the $\GC_2$ grade $1$) set $v_i^{\sfnext} = w_k$ instead of setting $v_i^{\sfnext} = v_i$. By Proposition \ref{convexprop}, them doing this does not impact convex validity, as the convex hull $\langle V \rangle$ of the honest $\TC(T)$ input vertices $V = \bigcup_{P_i \text{ is honest}}\{v_i\}$ is a superset of the convex hull of $\{w_k\} \cup V \subseteq \langle V \rangle$. So, edge agreement in $T$ follows from edge agreement in $H_k$ (in at most $6 + 6h(H_k) + 1 \leq 6h(T) + 1$ rounds) because after running $\GC_2$, the honest parties all run $\TC(H_k)$ with safe inputs in $H_k$ and all obtain their final outputs from $\TC(H_k)$. Again in this case, \mbox{the parties do not output $\sigma$ on Line \ref{line:tc-out-center} since none of them multicast $\CENTER$.}
        \item It could happen that the honest parties all obtain $\GC_2$ outputs in $\{(\bot, 0),(0, 1),(0, 2)\}$. Then, every honest party outputs $\sigma$, either directly on Line \ref{line:tc-direct-sigma} or Line \ref{line:tc-out-center} after it outputs from $\GC_2$, or even earlier by receiving $t + 1$ $\CENTER$ messages. In either case, the honest parties reach exact agreement on the safe vertex $\sigma$, in at most $6 \leq 6h(T) + 1$ rounds.
        \item Finally, it could happen that the honest parties all output $(k, 1)$ or $(\bot, 0)$ from $\GC_2$ for some $k \in \{1, \dots, d\}$, with at least one outputting $(k, 1)$. Then, every honest party $P_i$ that runs $\TC(T_i^\sfnext)$ does so with the tree $T_i^\sfnext$ set to $H_k$ and with the input vertex $v_i^\sfnext = w_k$. For the honest parties that output $(k, 1)$ from $\GC_2$, this follows from Line \ref{line:tc-set-v} and Line \ref{line:tc-set-t}. Meanwhile, for an honest party that outputs $(\bot, 0)$ and thus sets $(v_i^\sfnext, T_i^\sfnext) = (w_{k'}, H_{k'})$ on Line \ref{line:tc-kval-wait} upon receiving the message $(\KVAL, k')$ from $t + 1$ parties; this follows from the fact that every honest $\KVAL$ message is on $k$, which means that $k$ is the only value on which a party can receive $t + 1$ $\KVAL$ messages. From all of these, we conclude that every honest party either outputs the safe vertex $\sigma$ (on Line \ref{line:tc-direct-sigma} or Line \ref{line:tc-out-center}), or outputs $\sigma$'s safe neighbor $w_k$ after outputting $w_k$ from $\TC(H_k)$ which the honest parties only run with the input $w_k$. Finally, what remains to show is liveness. Observe that since $n - t \geq 2t + 1$, either $t + 1$ honest parties output $(\bot, 0)$ from $\GC_2$, or \mbox{$t + 1$ honest parties output $(k, 1)$ from $\GC_2$.} \begin{itemize}
            \item If the former happens, then $t + 1$ honest parties multicast $\CENTER$ after outputting $(\bot, 0)$ from $\GC_2$. Hence, after one round following $\GC_2$ (i.e.\ $7 \leq 6h(T) + 1$ rounds after $\TC(T)$ begins), every honest party becomes able to output $\sigma$ on Line \ref{line:tc-out-center}.
            \item If the latter happens, then $t + 1$ honest parties multicast $\msgpair{\KVAL}{k}$ after outputting $(k, 1)$ from $\GC_2$. So, after one round following $\GC_2$ (i.e.\ $7 \leq 6h(T) + 1$ rounds after $\TC(T)$ begins), the honest parties all learn $k$ and start running $\TC(H_k)$ with the common input $w_k$. Every honest party outputs $w_k$ from $\TC(H_k)$ once $7 + 6h(H_k) \leq 6h(T) + 1$ rounds have passed, and \mbox{thus outputs $\sigma$ or $w_k$ from $\TC(T)$ in at most $6h(T) + 1$ rounds.} \qedhere
        \end{itemize}
    \end{itemize}
\end{proof}

\subparagraph*{Complexity of $\TC(T)$.} The round complexity of $\TC(T)$ is at most $6h(T) + 1$, by Theorem \ref{tc-security}. If $T$'s maximum degree is $\Delta$, then in each of $\TC(T)$'s at most $h(T)$ recursive iterations the $\GC_2$ instance is run with inputs in $\{0, 1, \dots, \Delta\}$ and the $\KVAL$ messages carry values in $\{1, \dots, \Delta\}$. This means that each iteration costs $\BO(n^2)$ messages, each of size at most $\BO(\log \Delta + \log(h(T)))$, where the $\log (h(T))$ term is due to the iteration ID tags which distinguish different iterations' messages from each other. So, $\TC(T)$'s total message complexity is $\BO(n^2h(T))$, and its total communication complexity is $\BO(n^2h(T)(\log \Delta + \log(h(T))))$ bits. Finally, for when we need edge agreement in paths, note that $h(T) = q$ if $T$ is a path of length $2^q$ for any $q \geq 0$.

\subparagraph*{An Alternative Way.} A recent work has reduced edge agreement in a finite tree $T = (V, E)$ to two instances of edge agreement in paths of length $\BO(|V|)$ \cite{trees2025}. If one uses this reduction with our $\TC$ protocol serving as the path edge agreement protocol, then edge agreement in a finite tree $T = (V, E)$ costs $\BO(\log |V|)$ rounds, $\BO(n^2\log |V|)$ messages and $\BO(n^2\log |V| \log \log |V|)$ bits of communication. The disadvantage of using the reduction here is that it costs $\Theta(\log |V|)$ rounds even if $h(T) = o(\log |V|)$. For example, the spider tree $T_{k, \mathcal{M}}$ for multivalued $k$-graded consensus with the input domain $\mathcal{M}$ (the tree defined in Figure \ref{gradedfig}'s caption) has $|\mathcal{M}|k + 1$ vertices, which means that the reduction allows edge agreement in $T_{k, \mathcal{M}}$ in $\Theta(\log |\mathcal{M}| + \log k)$ rounds, while $\TC(T_{k, \mathcal{M}})$ only takes $\BO(\log k)$ rounds because $h(T_{k, \mathcal{M}}) = \BO(\log k)$.

\section{Edge Agreement in Infinite Paths} \label{sec_z}

In this section, we achieve edge agreement in the infinite paths $\mathbb{N}$ and $\mathbb{Z}$. First, we achieve edge agreement in $\mathbb{N}$ with a protocol based on exponential search where we again use $\GC_2$ to repeatedly shrink the agreement domain. Then, we build upon this protocol so that the parties need roughly half as many rounds to reach edge agreement when they have very large inputs. Finally, we upgrade edge agreement in $\mathbb{N}$ to edge \mbox{agreement $\mathbb{Z}$ with one initial $\GC_2$ iteration.}

\subsection{\texorpdfstring{Edge Agreement in $\mathbb{N}$}{Edge Agreement in N}} \label{sec_n}

We begin with a sequence of protocols $\Exp_0, \Exp_1, \dots$, with each protocol $\Exp_j$ allowing the parties to reach edge agreement in the infinite path $(2^j - 1, \dots)$. In $\Exp_j$, we use $\GC_2$ to reduce edge agreement in $(2^j - 1, \dots)$ to edge agreement in either the left path $(2^j - 1, \dots, 2^{j + 1} - 1)$, which the parties can reach via $\TC$, or the right path $(2^{j + 1} - 1, \dots)$, which the parties can recursively reach via $\Exp_{j+1}$. At some point, this recursion ends: When the parties run $\Exp_0$ with inputs in $\{0, \dots, M\}$ for some $M \geq 0$, there is eventually some $q \leq \floor{\log_2(\max(M, 1))}$ such that in $\Exp_{q}$ (which recursively appears in $\Exp_0$) the parties do not prefer the right path $(2^{q+1} - 1, \dots)$ as they have inputs below $2^{q+1}$, and consequently they reach edge agreement in the finite left path $(2^q - 1, \dots, 2^{q + 1} - 1)$.

\begin{algobox}[\textbf{ with the input }$v_i$]{Protocol $\Exp_j$}
    \State join a common instance of $\mathsf{GC}_2$
    \State let your $\GC_2$ input be $\LEFT$ if $v_i < 2^{j + 1}$, and let it be $\RIGHT$ otherwise\llabel{exp-gc-input}
    \State wait until you output some $(k, g)$ from $\mathsf{GC}_2$\llabel{exp-gc-wait}
    \State $\mathsf{side} \gets k$
    \If{$g = 1$}
        \State $v_i^{\sfnext} \gets 2^{j + 1} - 1$
        \State multicast $k$
    \ElsIf{$(k, g) = (\LEFT, 2)$}
        \State $v_i^{\sfnext} \gets \min(v_i, 2^{j + 1} - 1)$\llabel{exp-le-2-val}
    \ElsIf{$(k, g) = (\RIGHT, 2)$}
        \State $v_i^{\sfnext} \gets \max(v_i, 2^{j + 1} - 1)$\llabel{exp-ri-2-val}
    \ElsIf{$(k, g) = (\bot, 0)$}
        \State $v_i^{\sfnext} \gets 2^{j + 1} - 1$
        \State output $v_i^{\sfnext}$ from $\Exp_j$\llabel{exp-out-bot}
        \State multicast $\CENTER$\llabel{exp-center-cast}
        \parState{wait until you have received some $k \in \{\LEFT, \RIGHT\}$ from $t + 1$ parties, and let $\mathsf{side} \gets k$ when this happens}\llabel{exp-wait-side}
    \EndIf
    \If{$\mathsf{side} = \LEFT$}
        \parState{run an instance of $\TC((2^j - 1, \dots, 2^{j + 1} - 1))$ with the other parties for edge agreement in the path $(2^j - 1, \dots, 2^{j + 1} - 1)$, where your input is $v_i^{\sfnext}$}\llabel{exp-run-tc}
        \State if $g \geq 1$, then when you output $y$ from $\TC$, output $y$ from $\Exp_j$\llabel{exp-out-le}
    \ElsIf{$\mathsf{side} = \RIGHT$}
        \parState{run an instance of $\Exp_{j + 1}$ with the other parties for edge agreement in the path $(2^{j + 1} - 1, \dots)$, where your input is $v_i^{\sfnext}$}\llabel{exp-run-exp-next}
        \State if $g \geq 1$, then when you output $y$ from $\Exp_{j + 1}$, output $y$ from $\Exp_j$\llabel{exp-out-ri}
    \EndIf
    \Upon{receiving $\CENTER$ from $t + 1$ parties}
        \State output $2^{j + 1} - 1$ if you haven't output before and ignore future output commands\llabel{exp-center-out} \Comment{Note that $P_i$ should run Line \ref{line:exp-center-out} as soon as it receives $\CENTER$ from $t + 1$ parties, even if it is waiting for something else, e.g.\ waiting to output from $\GC_2$ on Line \ref{line:exp-gc-wait}}
    \EndUpon
\end{algobox}

\usetikzlibrary{positioning}
\usetikzlibrary{math}
\begin{figure}[ht]
    \centering
    \begin{tikzpicture}[scale=1]
        \tikzmath{\xval = 2.53;}
        \tikzmath{\yval = 10;}
        \node[] at (-2*\xval, 0) (l2) {$(\LEFT, 2)$};
        \node[] at (-\xval, 0) (l1) {$(\LEFT, 1)$};
        \node[] at (0, 0) (center) {$(\bot, 0)$};
        \node[] at (\xval, 0) (r1) {$(\RIGHT, 1)$};
        \node[] at (2*\xval, 0) (r2) {$(\RIGHT, 2)$};
        \draw[] (l2)--(l1)--(center)--(r1)--(r2);
        \node[below=0.5pt of l2,align=center,font=\fontsize{6.5pt}{\yval pt}\selectfont] {Run $\TC$ with the input\\$v_i^{\sfnext} = \min(v_i, 2^{j+1} - 1)$.\\Upon outputting $y$ from $\TC$,\\output $y$ from $\Exp_j$.\\Upon rcving $t + 1$ $\CENTER$ msgs,\\output $2^{j+1} - 1$ from $\Exp_j$.};
        \node[above=0.5pt of l1,align=center,font=\fontsize{6.5pt}{\yval pt}\selectfont] {Multicast $\LEFT$.\\Run $\TC$ with the input $2^{j+1} - 1$.\\Upon outputting $y$ from $\TC$,\\output $y$ from $\Exp_j$.\\Upon rcving $t + 1$ $\CENTER$ msgs,\\output $2^{j+1} - 1$ from $\Exp_j$.};
        \node[below=0.5pt of center,align=center,font=\fontsize{6.5pt}{\yval pt}\selectfont] {Multicast $\CENTER$.\\Output $2^{j+1} - 1$ from $\Exp_j$.\\If you receive $t + 1$ $\LEFT$ msgs,\\run $\TC$ with the input $2^{j+1} - 1$.\\If you receive $t + 1$ $\RIGHT$ msgs,\\run $\Exp_{j+1}$ with the input $2^{j+1} - 1$.\\Upon rcving $t + 1$ $\CENTER$ msgs,\\output $2^{j+1} - 1$ from $\Exp_j$.};
        \node[above=0.5pt of r1,align=center,font=\fontsize{6.5pt}{\yval pt}\selectfont] {Multicast $\RIGHT$.\\Run $\Exp_{j+1}$ with the input $2^{j+1} - 1$.\\Upon outputting $y$ from $\Exp_{j+1}$,\\output $y$ from $\Exp_j$.\\Upon rcving $t + 1$ $\CENTER$ msgs,\\output $2^{j+1} - 1$ from $\Exp_j$.};
        \node[below=0.5pt of r2,align=center,font=\fontsize{6.5pt}{\yval pt}\selectfont] {Run $\Exp_{j+1}$ with the input\\$v_i^{\sfnext} = \max(v_i, 2^{j+1} - 1)$.\\Upon outputting $y$ from $\Exp_{j+1}$,\\output $y$ from $\Exp_j$.\\Upon rcving $t + 1$ $\CENTER$ msgs,\\output $2^{j+1} - 1$ from $\Exp_j$.};
    \end{tikzpicture}
    \caption{A depiction of how each party behaves in $\Exp_j$, depending on the party's $\mathsf{\mathsf{GC}_2}$ output. The crucial observation is that the parties always obtain $\GC_2$ outputs that are adjacent in the figure. Note that a party can output $2^{j+1} - 1$ by receiving $t + 1$ $\CENTER$ messages before \mbox{it outputs from $\GC_2$.}}
\end{figure}

\apxthm{theorem}{expsecurity}{
    For any positive integer $q$ and any $0 \leq j < q$, suppose the honest parties run $\Exp_j$ with inputs in $\{2^j - 1, 2^q - 1\}$. Then, they reach edge agreement in the path $(2^j - 1, 2^q - 1)$ based on their inputs in at most $6(j+1)$ rounds if they have the common input $2^j - 1$ and in at most $7 + 12q - 6j$ rounds otherwise.
}

We prove Theorem \ref{expsecurity} in the \hyperref[expsecurity*]{appendix}, with a proof similar to the one of Theorem \ref{tc-security}.

\subparagraph*{Complexity of $\Exp_0$.} Consider an $\Exp_0$ execution where the maximum honest input is $M$. Let $q = \floor{\log_2(\max(M, 1))}$, i.e.\ let $q$ be the least non-negative integer such that every honest input is in $\{0, \dots, 2^{q+1} - 1\}$. In the execution, there is some minimum $k \in \{0, \dots, q\}$ such that in $\Exp_k$ (which is recursively a subprotocol of $\Exp_0$ if $k > 0$), the parties do not set $\mathsf{side} = \RIGHT$, and therefore do not execute $\Exp_{k+1}$. So, the $\Exp_0$ execution consists of the $\GC_2$ instances and the $\LEFT$/$\CENTER$/$\RIGHT$ multicasts of $\Exp_0, \Exp_1, \dots, \Exp_k$, plus the $\TC((2^k - 1,\dots,2^{k+1} - 1))$ instance inside $\Exp_k$. In total, these amount to $\BO(n^2k) = \BO(n^2q) = \BO(n^2\log M)$ messages, each of size $\BO(\log k) = \BO(\log q) = \BO(\log \log M)$ due to the message tags (since for all $j \geq 0$ one can assign $\BO(\log j)$-bit tags to $\Exp_j$ and to its $\GC_2$ and $\TC$ subprotocols). Meanwhile, the round complexity is at most \mbox{$12q + 19 = 12\log_2 M + \BO(1)$, by Theorem \ref{expsecurity}.}

\subsection{Nearly Halving the Round Complexity}

The protocol $\Exp_0$ is based on exponential search, which is a search algorithm to find a target value $x \in \mathbb{N}$ with approximately $2\log_2 x$ checks for different choices of $v$ whether $v \leq x$ or not. Each such check roughly corresponds to a 6-round $\GC_2$ instance in $\Exp_0$, and thus $\Exp_0$ takes $12\log_2 M + \BO(1)$ rounds by the simple fact that $2 \cdot 6 = 12$. However, it is possible to almost halve the number of checks. The strategy for this, from \cite{bentley76}, is to accelerate exponential search by using it to find $q = \floor{\log_2(x + 1)}$ instead of $x$, and then, knowing that $2^q - 1 \leq x \leq 2^{q + 1} - 2$, to \mbox{run binary search with the lower and upper bounds $2^q - 1$ and $2^{q+1} - 2$ to find $x$.}

Below, we use an analogous strategy in our protocol $\TStep(\Agn)$ for edge agreement in $\mathbb{N}$, \linebreak where we let $\Agn$ be any protocol for edge agreement in $\mathbb{N}$, and accelerate $\Agn$ by using it in a two-stage manner. In $\TStep(\Agn)$, the parties roughly speaking use $\Agn$ to agree on some $k$ such that the path $(2^k - 1, \dots, 2^{k + 1} - 1)$ intersects the convex hull of the honest inputs, with each party $P_i$ with the input $v_i$ wanting $k$ to be $\floor{\log_2(v_i + 1)}$. Then, each party with an input outside this path adopts a new safe value in the path (the endpoint of the path closest to its value), and finally the parties run $\TC$ to reach edge agreement in the path. Naturally, we face the issue that $\Agn$ does not guarantee agreement on a single $k$, which means that the parties might output different (though adjacent) $k$ values from it and thus behave differently after outputting from $\Agn$. Luckily, by making each party $P_i$ let its $\Agn$ input be what it wants $5k$ to be ($5\floor{\log_2(v_i + 1)}$) rather than what it wants $k$ to be, we can ensure that the discrepancy \mbox{between how each party behaves after outputting from $\Agn$ is sufficiently small.}

\begin{algobox}[\textbf{ with the input }$v_i$]{Protocol $\TStep(\Agn)$}
    \State join a common instance of $\Agn$ where your input is $5\floor{\log_2(v_i + 1)}$
    \Upon{outputting $5k_i + r_i$ from $\Agn$ for some $r_i \in \{0, \dots 4\}$}
        \If{$r_i = 0$}
            \State $v_i^{\sfnext} \gets \min(\max(v_i, 2^{k_i} - 1), 2^{k_i + 1} - 1)$
        \Else
            \State $v_i^{\sfnext} \gets 2^{k_i + 1} - 1$
        \EndIf
        \If{$0 \leq r_i \leq 2$}
            \parState{run an instance of $\TC((2^{k_i} - 1, \dots, 2^{k_i + 1} - 1))$ with the other parties for edge agreement in $(2^{k_i} - 1, \dots, 2^{k_i + 1} - 1)$, where your input is $v_i^{\sfnext}$}
            \State if $r_i \leq 1$, then when you output $y$ from $\TC$, output $y$ from $\TStep(\Agn)$
        \EndIf
        \If{$2 \leq r_i \leq 3$}
            \State output $v_i^{\sfnext}$
        \EndIf
        \If{$3 \leq r_i \leq 4$}
            \parState{run an instance of $\TC((2^{k_i + 1} - 1, \dots, 2^{k_i + 2} - 1))$ with the other parties for edge agreement in $(2^{k_i + 1} - 1, \dots, 2^{k_i + 2} - 1)$, where your input is $v_i^{\sfnext}$}
            \State if $r_i = 4$, then when you output $y$ from $\TC$, output $y$ from $\TStep(\Agn)$
        \EndIf
    \EndUpon
\end{algobox}

\apxthm{theorem}{dexpsecurity}{
    Suppose $\Agn$ is a live edge agreement protocol for $\mathbb{N}$ which for some function $f$ takes at most $f(M)$ rounds when the honest parties run it with inputs in $\{0, \dots, M\}$. Then, $\TStep(\Agn)$ is a live edge agreement protocol for $\mathbb{N}$ that takes at most $f(5\floor{\log_2(M + 1)}) + 6\floor{\log_2(M + 1)}+1$ rounds when the honest parties run it with inputs in $\{0, \dots, M\}$.
}

We prove Theorem \ref{dexpsecurity} in the \hyperref[dexpsecurity*]{appendix}, with a proof similar to the one of Theorem \ref{tc-security}. Although $\TStep(\Agn)$ does not use graded consensus, it depends on the same core idea as $\TC$ and $\Exp$: The parties reach graded consensus on how to behave, and thus they behave well together. We do not need an actual graded consensus protocol in $\TStep(\Agn)$ to implement this idea as the ``grades'' in $\TStep(\Agn)$ are provided directly by $\Agn$, in the form of the $r_i$ value which a party $P_i$ obtains \mbox{as the remainder when it divides its $\Agn$ output by $5$.}

Note that $\TStep(\Agn)$'s round complexity $f(5\floor{\log_2(M + 1)}) + 6\floor{\log_2(M + 1)}+1$ stated in Theorem \ref{dexpsecurity} follows straightforwardly from $\Agn$ taking at most $f(5\floor{\log_2(M + 1)})$ rounds and $\TC$ taking at most $6\floor{\log_2(M + 1)}+1$ rounds.

Observe that $\TStep$ allows us to obtain an infinite sequence of protocols $\Pi_1, \Pi_2, \Pi_3, \dots$ for edge agreement in $\mathbb{N}$ where $\Pi_1 = \Exp_0$ and $\Pi_k = \TStep(\Pi_{k-1})$ for all $k \geq 2$, with each protocol $\Pi_k$ corresponding to the unbounded search algorithm $B_k$ in \cite{bentley76}. Letting $f_k(M)$ for each $k \geq 1$ be $\Pi_k$'s round complexity depending on the maximum honest input $M$, we have $f_1(M) = 12\log_2 M + \BO(1)$ and have $f_{k+1}(M) = f_k(5\floor{\log_2(M + 1)}) + 6\floor{\log_2(M + 1)} + 1$ for all $k \geq 1$. One can show by induction that $f_k(M) = 6(L^{(k)}(M) +\sum_{j = 1}^{k} L^{(k)}(M)) + \BO(k) $ for all $k \geq 1$, where $L^{(1)}(v) = \floor{\log_2(v + 1)}$ and $L^{(j)}(v) = L(L^{(j - 1)}(v))$ for all $j \geq 2$.

The protocol $\Pi_2 = \TStep(\Exp_0)$ above notably takes $\frac{1}{2} + \BO(\frac{\log \log M}{\log M}) = \frac{1}{2} + o(1)$ times as many rounds as $\Pi_1 = \Exp_0$, and only $1 + o(1)$ times as many rounds as the protocols $\Pi_3, \Pi_4, \dots$ beyond it. \mbox{Therefore, in the rest of the paper we use $\TStep(\Exp_0)$ for edge agreement in $\mathbb{N}$.}

\subparagraph*{Complexity of $\TStep(\Exp_0)$.} Consider a $\TStep(\Exp_0)$ execution where the maximum honest input is $M$, and let $q = \floor{\log_2 (M + 1)}$. The initial $\Exp_0$ in $\TStep(\Exp_0)$ takes $12\log_2 q + \BO(1)$ rounds, with $\BO(n^2\log q)$ messages and $\BO(n^2\log q \log \log q)$ bits of communication in total. Afterwards, in $\TStep(\Exp_0)$'s remaining $6q+1$ rounds, each party $P_i$ runs an instance of $\TC$, where $z_i$ is the party's $\Exp_0$ output that is upper bounded by $5q$. Due to $\GC_2$ and therefore $\TC$ being fully symmetric protocols with balanced communication, each party $P_i$ sends $\BO(nz_i) = \BO(nq)$ messages and $\BO(nz_i\log z_i) = \BO(nq\log q)$ bits to the others in $\TC$. Hence, we conclude that $\TStep(\Exp_0)$ takes $12\log_2 q + 6q + \BO(1) = 6\log_2 M + 12\log_2\log_2 M + \BO(1)$ rounds, with the total message complexity $\BO(n^2q) = \BO(n^2\log M)$ and the total communication complexity of $\BO(n^2q\log q) = \BO(n^2\log M \log \log M)$ bits.

\subsection{\texorpdfstring{Edge Agreement in $\mathbb{Z}$}{Edge Agreement in Z}}

As previously, let $\Agn$ be a protocol for edge agreement in $\mathbb{N}$. Observe that the parties can reach edge agreement in $(\dots, -1, 0)$ via $\Agn$ by mirroring their inputs (multiplying them by $-1$), running $\Agn$ with their mirrored inputs, and mirroring their $\Agn$ outputs. This allows the protocol $\Agz(\Agn)$ below for edge agreement in $\mathbb{Z}$, where the parties run $\GC_2$ with the input $-1$ if they have negative $\Agz$ inputs and the input $1$ otherwise, and then use their $\GC_2$ outputs to decide on both their $\Agn$ inputs and on if they want to run $\Agn$ with mirroring or not.  Note that \cite{glw24} achieves convex agreement in $\mathbb{Z}$ similarly, though there the parties reach byzantine \mbox{agreement instead of 2-graded consensus on if they have negative inputs or not.}

\begin{algobox}[\textbf{ with the input }$v_i$]{Protocol $\Agz(\Agn)$}
    \State join a common instance of $\mathsf{GC}_2$ and a common instance of $\Agn$
    \State let your $\GC_2$ input be $1$ if $v_i \geq 0$, and let it be $-1$ otherwise
    \State wait until you output some $(k, g)$ from $\mathsf{GC}_2$
    \If{$(k, g) = (\bot, 0)$}
        \State let your $\Agn$ input be $0$
        \State output $0$ from $\Agz(\Agn)$
    \Else
        \State let your $\Agn$ input be $v_i^{\sfnext} = \max(0, (g - 1) \cdot k \cdot v_i)$
        \State when you output $y$ from $\Agn$, output $k \cdot y$ from $\Agz(\Agn)$
    \EndIf
\end{algobox}

\apxthm{theorem}{agntoagz}{
    If $\Agn$ is a live protocol for edge agreement in $\mathbb{N}$, then $\Agz(\Agn)$ is a live protocol for edge agreement in $\mathbb{Z}$.
}

We prove Theorem \ref{agntoagz} in the \hyperref[agntoagz*]{appendix}, with a proof similar to the one of Theorem \ref{tc-security}.

\subparagraph{Complexity.} When the honest parties run $\Agz(\Agn)$ with inputs in $\{-M, \dots, M\}$ for some $M \geq 0$, they run $\GC_2$, and then run $\Agn$ with inputs upper bounded by $M$. So, the $\Agz$ outer shell adds an overhead of $6$ rounds, $\BO(n^2)$ messages and $\BO(n^2)$ bits of communication to $\Agn$. If $\Agn = \TStep(\Exp_0)$, then $\Agz(\Agn)$'s complexity is $6\log_2 M + \BO(\log \log M)$ rounds, $\BO(n^2\log M)$ messages and $\BO(n^2\log M \log \log M)$ bits of communication.

\section{Termination} \label{sec_termination}

The edge agreement and graded consensus protocols we present in the other sections achieve liveness, but they do not terminate. That is, they require every honest party to run forever, and lose liveness if any honest party ever halts. Fortunately, there exists a simple 3-round quadratic-complexity protocol $\Term$ which can address this shortcoming. We present it in the \hyperref[termAppendix]{appendix} with a proof of its security. It is the same termination procedure as the one in \cite{qrbc25} (which terminates edge agreement in paths of the form $(0, \frac{1}{k}, \frac{2}{k}, \dots, 1$), with only surface-level changes. It relies on the fact that edge agreement protocols guarantee agreement on at most two values; that is, they guarantee that for some $y$ and $y'$ everybody outputs $y$ or $y'$. Note that unlike our other protocols, $\Term$ does not require every \mbox{honest party to acquire an input.}

\apxthm{theorem}{termSec}{
    In any $\Term$ execution where the maximum honest message delay is $\Delta$ and the honest parties only acquire inputs in some set $\{y, y'\}$, if by some time $T$ either every honest party acquires an input in $\{y, y'\}$ or some honest party terminates $\Term$, then by the time $T + 3\Delta$ every honest party terminates $\Term$ \mbox{with an output that is an honest party's input.}
}

Since $\Term$ is not novel, we present it and prove Theorem \ref{termSec} in the \hyperref[termAppendix]{appendix}. Our proof is similar to the one in \cite{qrbc25}, though \cite{qrbc25} does not prove that $\Term$ takes at most 3 rounds.

For any live edge agreement protocol $\Pi$, there is a sequentially composed protocol $\Term \circ \Pi$ where the parties run $\Pi$ to reach edge agreement, and run $\Term$ to terminate. In $\Term \circ \Pi$, each party lets its $\Pi$ input be its edge agreement inputs, lets its $\Term$ input be its $\Pi$ output (if and when it outputs from $\Pi)$, and terminates with the output $y$ upon terminating $\Term$ with any output $y$. The composed protocol $\Term \circ \Pi$ inherits the edge agreement and convex validity properties of $\Pi$ since each honest party's $\Term$ output is some honest party's $\Pi$ output. Furthermore, if $\Pi$ achieves liveness in $k$ rounds, then $\Term \circ \Pi$ terminates in $k + 3$ rounds. This is because after $k$ rounds, either every honest has output from $\Pi$ and thus acquired a $\Term$ input, or some honest party has terminated $\Term$. In either case, the honest parties all terminate $\Term$ within $3$ more rounds, i.e.\ within $k + 3$ rounds of every honest party acquiring a $\Pi$ input. Note that some honest parties might terminate $\Term$ before $k$ rounds have passed and stop running $\Pi$ before everybody outputs from $\Pi$. This might cause $\Pi$ to lose its liveness, meaning that some honest parties might never output from $\Pi$ and thus never acquire $\Term$ inputs. This is not an issue because if some honest party terminates $\Term$, then every honest party terminates $\Term$, \mbox{even if not every honest party acquires a $\Term$ input.}

In $\Term$ (which we present in the \hyperref[termAppendix]{appendix}), each honest party multicasts one constant-size $\READY$ message, and multicasts at most one $\msgpair{\ECHO}{v}$ message for each value $v$ that is an honest party's $\Term$ input. So, for any edge agreement protocol $\Pi$, the overhead in $\Term \circ \Pi$ due to $\Term$ is $3$ rounds and $\BO(n^2)$ messages, each at most \mbox{roughly the size of an honest $\Pi$ output.}

\subparagraph*{Termination for edge agreement in $\mathbb{Z}$.} If the parties terminate with $\Term$ after reaching edge agreement in $\mathbb{Z}$ with inputs of magnitude at most $M$, then each honest $\Term$ input is an integer in $\{-M, \dots, M\}$, with a $\BO(\log M)$-bit representation. So, for any live protocol $\Agz$ for edge agreement in $\mathbb{Z}$, $\Term \circ \Agz$ is a terminating protocol for edge agreement in $\mathbb{Z}$ which costs $3$ more rounds, $\BO(n^2)$ more messages and $\BO(n^2\log M)$ \mbox{more bits of communication than $\Agz$.}

\subparagraph*{Termination for edge agreement in trees.} If the parties run $\Term$ to terminate after they reach edge agreement in a tree $T = (V, E)$, then each honest $\Term$ input is a vertex in $V$, with a $\BO(\log|V|)$-bit representation. So, $\Term \circ \TC(T)$ is a terminating protocol for edge agreement in $T$ which costs $3$ more rounds, $\BO(n^2)$ more messages and $\BO(n^2\log |V|)$ more bits of communication than $\TC(T)$. Note that $\Term$ does not cause an asymptotic communication complexity increase here because $\log_2|V| = \BO(h(T)\log\Delta)$, where $\Delta$ is $T$'s maximum degree. This is because $h(T) = \Omega(\log_\Delta|V|)$. To see why, let $h_\Delta(k)$ for any $k \geq 1$ be the minimum value $h(T)$ can have for any tree $T$ of maximum degree at most $\Delta \geq 2$ with $k$ vertices, and observe that $h_\Delta(k) \geq 1 + h_\Delta(\ceil{\frac{k - 1}{\Delta}})$ for all $k \geq 2$ since removing any centroid of degree $d \in [\Delta]$ from a tree with $k \geq 2$ vertices must create a forest whose largest component has at least $\ceil{\frac{k - 1}{\Delta}}$ vertices. By induction, \mbox{this implies $h_{\Delta}(\Delta^k) \geq k$ for all $k \geq 0$, which means that $h(T) = \Omega(\log_\Delta|V|)$.}

\section{Extension to Real Numbers} \label{sec_reduction}

Termination is harder for approximate agreement in $\mathbb{R}$. As approximate agreement in $\mathbb{R}$ does not entail agreement on two values, $\Term$ alone cannot directly provide termination. So, to achieve approximate agreement in $\mathbb{R}$ with termination, we reduce \mbox{it to edge agreement in $\mathbb{Z}$.}

\begin{theorem}
    For all $\varepsilon > 0$, $\varepsilon$-agreement in $\mathbb{R}$ with the maximum honest input magnitude $M$ can be reduced to edge agreement in $\mathbb{Z}$ with the maximum honest input magnitude $\ceil{\frac{2M}{\varepsilon} - \frac{1}{2}}$.
\end{theorem}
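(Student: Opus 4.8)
The reduction I have in mind adds no rounds, messages or bits: each party locally encodes its real input as an integer, the parties run a single instance of edge agreement in $\mathbb{Z}$, and each party locally decodes its integer output back to a real output. Because no new waiting is introduced, liveness (and termination, if the $\mathbb{Z}$-protocol terminates) is inherited verbatim, so the whole proof is about the two validity-type properties.

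The encoding uses grid spacing $\tfrac{\varepsilon}{2}$: party $P_i$ with input $x_i$, $|x_i| \le M$, sets $v_i$ to the integer nearest $\tfrac{2x_i}{\varepsilon}$, breaking ties towards $0$; this is monotone in $x_i$, satisfies $\bigl|\tfrac{\varepsilon}{2}v_i - x_i\bigr| \le \tfrac{\varepsilon}{4}$, and keeps $|v_i| \le \ceil{\tfrac{2M}{\varepsilon} - \tfrac12}$, the claimed input bound. The parties run edge agreement in $\mathbb{Z}$ on the $v_i$, and $P_i$ gets an integer output $z_i$. From edge agreement in $\mathbb{Z}$ I will use two facts: $|z_i - z_j| \le 1$ for honest $P_i, P_j$; and, by convex validity together with monotonicity of the encoding, $v_{\min} \le z_i \le v_{\max}$, where $v_{\min}$ and $v_{\max}$ are the encodings of the least and greatest honest inputs $x_{\min}$ and $x_{\max}$.

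The crux is the decoding. The naive choice $y_i := \tfrac{\varepsilon}{2}z_i$ would make agreement immediate ($|y_i - y_j| = \tfrac{\varepsilon}{2}|z_i - z_j| \le \tfrac{\varepsilon}{2} \le \varepsilon$) but can violate validity: $\tfrac{\varepsilon}{2}v_{\max}$ may overshoot $x_{\max}$ by up to $\tfrac{\varepsilon}{4}$, and symmetrically at the bottom. The fix I propose is to let the decoding depend on how $z_i$ compares to $P_i$'s own $v_i$: output $y_i := \tfrac{\varepsilon}{2}\bigl(z_i + \tfrac12\bigr)$ if $z_i < v_i$, output $y_i := \tfrac{\varepsilon}{2}\bigl(z_i - \tfrac12\bigr)$ if $z_i > v_i$, and output $y_i := x_i$ if $z_i = v_i$. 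The idea is that $z_i < v_i$ certifies to $P_i$ that it is not at the top of the honest range, so nudging the output up by $\tfrac{\varepsilon}{4}$ cannot break the upper validity bound, while it exactly compensates the possible shortfall at the bottom boundary (where $z_i = v_{\min} \le v_i$ forces this branch); the branch $z_i > v_i$ is the mirror image; and when $z_i = v_i$ the party's own input is trivially valid. Validity then follows from a three-case check combining $v_{\min} \le z_i \le v_{\max}$, the $\tfrac{\varepsilon}{4}$ rounding bound, and the fact that a strict integer gap between $z_i$ and $v_i$ is worth at least $\tfrac{\varepsilon}{2}$ — for instance, for $z_i < v_i$ one gets $y_i \le \tfrac{\varepsilon}{2}(v_{\max}-1) + \tfrac{\varepsilon}{4} \le x_{\max}$ and $y_i \ge \tfrac{\varepsilon}{2}v_{\min} + \tfrac{\varepsilon}{4} \ge x_{\min}$.

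The main obstacle is precisely this decoding step, because I then have to recheck agreement for the modified decode. The key observation is that in every branch $\bigl|y_i - \tfrac{\varepsilon}{2}z_i\bigr| \le \tfrac{\varepsilon}{4}$ (it equals $\pm\tfrac{\varepsilon}{4}$ in the two strict branches and is at most the rounding error in the third), so the triangle inequality gives $|y_i - y_j| \le \tfrac{\varepsilon}{4} + \tfrac{\varepsilon}{2}|z_i - z_j| + \tfrac{\varepsilon}{4} \le \varepsilon$. Combined with the inherited liveness/termination, this turns any (terminating) protocol for edge agreement in $\mathbb{Z}$ handling inputs of magnitude at most $\ceil{\tfrac{2M}{\varepsilon} - \tfrac12}$ — in particular $\Agz(\TStep(\Exp_0))$ — into a (terminating) protocol for $\varepsilon$-agreement in $\mathbb{R}$ with inputs of magnitude at most $M$; everything except the boundary behaviour of the decoding is routine bookkeeping.
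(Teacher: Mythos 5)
Your proof is correct and takes essentially the same approach as the paper: round each real input to the nearest integer (ties toward zero), run edge agreement in $\mathbb{Z}$, and decode by nudging the integer output back toward the party's own input by at most half a grid step. The only cosmetic differences are that you fold the $\varepsilon$-scaling into the encoding rather than first reducing to $2$-agreement, and you write the decoding as an explicit three-way case split where the paper uses a $\min/\max$ formula — one can check the two decodings coincide (when $z_i = v_i$ both output $x_i$, since $|z_i-x_i|\le\tfrac12$ grid steps then; otherwise both shift by exactly half a grid step toward $x_i$).
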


\begin{proof}
    We reduce $2$-agreement in $\mathbb{R}$ with the maximum honest input magnitude $M$ to edge agreement in $\mathbb{Z}$ with the maximum honest input magnitude $\ceil{M - \frac{1}{2}}$. This reduction implies the theorem since the parties can reach $\varepsilon$-agreement in $\mathbb{R}$ by multiplying their inputs by $\frac{2}{\varepsilon}$, reaching $2$-agreement, and dividing their outputs by $\frac{2}{\varepsilon}$.

    Each party $P_i$ rounds its $2$-agreement input $v_i$ to the nearest integer $v_i'$, rounding towards $0$ in the case of ties. Then, the parties reach edge agreement in $\mathbb{Z}$ with their rounded integers of magnitude at most $\ceil{M - \frac{1}{2}}$, and each party $P_i$ obtains some edge agreement output $y_i'$. For all honest parties $P_i$ and $P_j$ it holds that $|y_i' - y_j'| \leq 1$ and that $v_a' \leq y_i' \leq v_b'$, where $a$ and $b$ are respectively the indices of the honest parties with the minimum and maximum edge \linebreak agreement inputs. Observe that this implies $v_a - \frac{1}{2} \leq v_a' \leq y_i' \leq v_b' \leq v_b + \frac{1}{2}$ for all honest $P_i$.

    Afterwards, each party $P_i$ uses its 2-agreement input $v_i$ to convert $y_i'$ into its  $2$-agreement output $y_i$ by letting $y_i = \min(y_i' + \frac{1}{2}, v_i)$ if $y_i' \leq v_i$, and $y_i = \max(y_i' - \frac{1}{2}, v_i)$ otherwise. That is, $P_i$ obtains $y_i$ by moving from $y_i'$ towards its input $v_i$ by a distance of at most $\frac{1}{2}$. For all honest $P_i$ and $P_j$ it holds that $\min(v_a, v_i) \leq y_i \leq \max(v_b, v_i)$, which implies validity, and that $|y_i - y_j| \leq |y_i - y_i'| + |y_i' - y_j'| + |y_j' - y_j| \leq \frac{1}{2} + 1 + \frac{1}{2} = 2$, which implies $2$-agreement. \qedhere
\end{proof}

When $M$ is the maximum honest input magnitude, since we can achieve edge agreement in $\mathbb{Z}$ in $6\log_2 M + \BO(\log\log M)$ rounds with $\BO(n^2\log M)$ messages and $\BO(n^2\log M \log \log M)$ bits of communication, the reduction enables $\varepsilon$-agreement in $\mathbb{R}$ in $6\log_2 \frac{M}{\varepsilon} + \BO(\log \log \frac{M}{\varepsilon})$ rounds with $\BO(n^2\log \frac{M}{\varepsilon})$ messages and $\BO(n^2\log \frac{M}{\varepsilon}\log\log \frac{M}{\varepsilon})$ bits of communication.

Note that our reduction allows the parties to reach $\varepsilon$-agreement in $\mathbb{R}$ by only sending each other discrete messages. So, it allows approximate agreement without any rounding errors. In contrast, protocols where the parties send values in $\mathbb{R}$ \cite{dolev86, aad04, glw22, delphi} must in order to avoid rounding errors assume that the inputs in $\mathbb{R}$ fit $\ell$-bit strings, with $\ell$ affecting complexity.

\subparagraph{Future Work.} It remains open to design an asynchronous quadratic-communication protocol for $\varepsilon$-agreement in $\mathbb{R}$ that tolerates $t < \frac{n}{3}$ faults in $\BO(\log \frac{S}{\varepsilon})$ rounds, where $S$ is the honest input spread. Our protocol takes $\BO(\log \frac{M}{\varepsilon})$ rounds, where possibly $M \gg S$, while Abraham et al.\ achieve the lower round complexity $\BO(\log \frac{S}{\varepsilon})$ but with cubic communication \cite{aad04}. In fact, this task is open not just for asynchronous networks, but for synchronous networks as well. The classical synchronous protocol in \cite{dolev86} which at first seems to fit the bill in fact takes as many rounds as the adversary desires because its round complexity scales with the spread of \emph{all} inputs, including fake byzantine ones. \mbox{It would be a good first step to solve this issue.}

\bibliography{refs}

\appendix

\section{Graded Consensus Protocols} \label{GradedConsensus}

In this section, we construct a family of $2^k$-graded consensus protocols $\mathsf{GC}_{2^0}, \mathsf{GC}_{2^1}, \mathsf{GC}_{2^2}, \dots$ that support inputs in $\{0, 1\}^\ell$ for any public length parameter $\ell \geq 1$. The complexity of each protocol $\GC_{2^k}$ is $3k + 3$ rounds, $\BO(kn^2)$ messages and $\BO(kn^2(\ell + \log k))$ bits of communication. To obtain these protocols we begin with $\mathsf{GC}_1$, and then for all $k \geq 1$ obtain $\GC_{2^k}$ by grade-doubling $\mathsf{GC}_{2^{k-1}}$ in 3 rounds. While we only use $\GC_2$ in this paper, note that grades beyond 2 \linebreak can also be useful, as $2^k$-graded consensus forms the ``expand'' part of the ``expand-and-extract'' paradigm of achieving byzantine \mbox{agreement with an $\BO(2^{-k})$ error probability \cite{fmll21}.}

As $2^k$-graded consensus is a special case of edge agreement with the centroid decomposition height $k + 1$ (see Figure \ref{gradedfig}), our edge agreement protocol $\TC$ is also a $2^k$-graded consensus protocol, but one that takes at most $6k + 7$ rounds. The more restricted definition of $2^k$-graded consensus allows us to achieve it in $3k + 3$ rounds instead. This is the round complexity of the binary $2^k$-graded consensus protocol in \cite{delphi}, and it is for all $k \geq 0$ lower than the round complexity of any previous multivalued $2^k$-graded consensus protocol which we are aware of that tolerates $t < \frac{n}{3}$ faults with perfect security. Note that $2^k$-graded consensus is possible in less rounds than $3k + 3$ if $t < \frac{n}{5}$ ($k + 1$ rounds \cite{dolev86, aw24}) or if the parties can use a cryptographic setup ($k + 2$ rounds, \cite{simon25}). If in our edge agreement protocols we replaced $\GC_2$ with a faster $r$-round $2$-graded consensus protocol, then our edge protocols would accordingly become $\frac{6}{r}$ times faster, excluding some \mbox{constant round complexity terms that would not change.}

\subsection{1-Graded Consensus}

Below is our live 3-round multivalued 1-graded consensus protocol $\GC_1$.

\begin{algobox}[ with the input $v_i$]{Protocol $\mathsf{GC}_1$}
    \State $V_i^1, V_i^2, \dots, V_i^\ell \gets \varnothing, \varnothing, \dots, \varnothing$
    \State $W_i^1, W_i^2, \dots, W_i^\ell \gets \varnothing, \varnothing, \dots, \varnothing$
    \State multicast $\msgpair{\ECHO}{v_i}$

    \Upon{receiving echoes ($\ECHO$ messages) on $\bot$ or on strings other than $v_i$ from $t + 1$ parties}
        \State multicast $\msgpair{\ECHO}{\bot}$
        \State output $(\bot, 0)$ if you haven't output before, and keep running
    \EndUpon

    \For{any bit $b \in \{0, 1\}$ and any index $k \in \{1, \dots, \ell\}$, when you have received echoes on $\bot$ or on strings with the $k^{\text{th}}$ bit $b$ from $t+1$ parties}\llabel{gc1v}
        \State $V_i^k \gets V_i^k \cup \{b\}$
        \If{$V_i^k = \{0, 1\}$}
            \State output $(\bot, 0)$ if you haven't output before, and keep running\llabel{gc1-vout}
        \EndIf
    \EndFor

    \For{any bit $b \in \{0, 1\}$ and any index $k \in \{1, \dots, \ell\}$, when you have received echoes on $\bot$ or on strings with the $k^{\text{th}}$ bit $b$ from $2t + 1$ parties}\llabel{gc1w}
        \State $W_i^k \gets W_i^k \cup \{b\}$
        \If{$W_i^1, W_i^2, \dots, W_i^\ell = \{c_1\}, \{c_2\}, \dots, \{c_\ell\}$ for some $c_1,\dots,c_\ell \in \{0,1\}^\ell$}\llabel{gc1cond}
            \State multicast $\msgpair{\PROP}{c_1 \| c_2 \| \dots \| c_\ell}$ \Comment{Here, $||$ stands for string concatenation.}
        \EndIf
    \EndFor

    \Upon{receiving ($\PROP$ messages) on some string $v$ from $n - t$ parties}
        \State let $y = (v, 1)$ if $v = v_i$, and let $y = (\bot, 0)$ otherwise
        \State output $y$ if you haven't output before, and keep running
    \EndUpon
\end{algobox}

This protocol is based on the 4-round protocol $\mathsf{AWC}$ in \cite{dm24}. Our addition is the list of sets $V_i^1, \dots, V_i^\ell$, which $\mathsf{AWC}$ does not have. While according to $\mathsf{AWC}$ a party $P_i$ would output $(\bot, 0)$ upon observing that $W_i^k = \{0, 1\}$ for some $k$, in $\GC_1$ a party $P_i$ outputs $(\bot, 0)$ upon observing that $V_i^k = \{0, 1\}$. As the parties fill their $V$ sets more readily than their $W$ sets (due to Line \ref{line:gc1v} having a weaker condition than Line \ref{line:gc1w}), the $V$ sets let us shave off one round.

\begin{theorem}
    $\mathsf{GC}_1$ is a secure $3$-round $1$-graded consensus protocol with liveness.
\end{theorem}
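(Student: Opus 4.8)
The plan is to establish, in order, intrusion tolerance, validity, agreement, liveness, and the $3$-round bound, with liveness the main work. All of these use $n\ge 3t+1$: then $n-t\ge 2t+1$, any two sets of $n-t$ parties intersect in at least $t+1$ parties, and any $t+1$ parties contain an honest one.

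\emph{Intrusion tolerance} is immediate, since the only output with positive grade, $(v_i,1)$, is produced on the last line with value $v_i$, the outputting party's own input. For \emph{validity}, if all honest parties hold the common input $m$, then at most $t$ $\ECHO$ messages ever carry $\bot$ or a string other than $m$, so no honest party triggers the $\bot$-echo rule and no honest party inserts $\overline{m_k}$ into any $V_i^k$ or $W_i^k$; hence every honest party eventually has $W_i^k=\{m_k\}$ for all $k$, multicasts $\msgpair{\PROP}{m}$, receives $n-t$ such messages, and outputs $(m,1)$ (it cannot receive $n-t$ $\PROP$ messages on another string, as honest parties propose only $m$). For \emph{agreement}, grades trivially differ by at most $1$; if honest $P_i$ and $P_j$ both output grade $1$, then their $\PROP$-quorums on $v_i$ and on $v_j$ share an honest party $P_\ell$, and an honest party proposes at most one string --- its $W$-sets only grow, once the tuple $(W_\ell^1,\dots,W_\ell^\ell)$ is all-singletons its contents are frozen until some $W_\ell^k$ becomes $\{0,1\}$, after which the proposal condition fails forever --- so $v_i=v_j$.

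\emph{Liveness} is the crux, which I would prove by contradiction: suppose some honest $P_i$ never outputs. Then $P_i$ never receives $t+1$ $\ECHO$ messages on $\bot$ or on a string other than $v_i$, so the set $B$ of honest parties that ever echo $\bot$ or hold an input $\ne v_i$ has $|B|\le t$, whence the complementary set $A$ of honest parties that only ever echo $v_i$ has $|A|\ge n-2t\ge t+1$. Since every honest party hears the $\ge t+1$ echoes on $v_i$ coming from $A$, every honest party whose input differs from $v_i$ is forced to trigger the $\bot$-echo rule, so in fact \emph{every} party of $B$ eventually multicasts $\msgpair{\ECHO}{\bot}$. Now fix an honest $P_m$ and a bit position $k$: the parties of $A$ send $P_m$ an echo with $k$-th bit $v_{i,k}$ and the parties of $B$ send $P_m$ an echo on $\bot$, so $P_m$ collects ``$\bot$ or $k$-th bit $v_{i,k}$'' echoes from all $|A|+|B|\ge 2t+1$ honest parties and puts $v_{i,k}$ into $W_m^k$; conversely ``$\bot$ or $k$-th bit $\overline{v_{i,k}}$'' can be sent only by the $\le t$ parties of $B$ and the $\le t$ byzantine parties, fewer than $2t+1$, so $W_m^k$ never gains $\overline{v_{i,k}}$. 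Thus every honest party reaches $(W_m^1,\dots,W_m^\ell)=(\{v_{i,1}\},\dots,\{v_{i,\ell}\})$ and multicasts $\msgpair{\PROP}{v_i}$, so $P_i$ eventually receives $n-t$ such messages and outputs $(v_i,1)$ --- a contradiction. The delicate point is this forcing step: the honest parties with a minority input must echo $\bot$, and it is exactly those $\bot$-echoes that let everyone's $W$-sets close up to singletons.

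Finally, the $3$-round bound follows from a message-delay count. If some input value $v^*$ is held by $\ge t+1$ honest parties, then every honest party with a different input sees $\ge t+1$ echoes on $v^*$ within one round and echoes $\bot$ within one round, so from the $v^*$-holders' initial echoes together with these $\bot$-echoes --- which between them cover all $\ge 2t+1$ honest parties --- every honest $W_m^k$ contains $v^*_k$ within two rounds; if moreover at most $t$ honest parties ever echo $\bot$, then $W_m^k=\{v^*_k\}$, so $\msgpair{\PROP}{v^*}$ is sent within two rounds and every honest party outputs within three. In the complementary case --- which includes the case where no value is held by $\ge t+1$ honest parties, where \emph{every} honest party echoes $\bot$ within one round --- at least $t+1$ honest parties echo $\bot$, so every honest party collects $t+1$ $\bot$-echoes and outputs $(\bot,0)$ via a $V$-set within three rounds. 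It is precisely the weaker $t+1$ threshold of the $V$-sets --- the feature $\GC_1$ adds to the four-round protocol $\mathsf{AWC}$ of \cite{dm24} --- that keeps this $(\bot,0)$-path from costing a fourth round; and when all honest inputs agree the protocol even finishes in two rounds, as in the validity argument.
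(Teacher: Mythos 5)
Your proofs of intrusion tolerance, validity, and agreement are correct and essentially match the paper's. Your eventual-liveness argument is a nice proof by contradiction (the paper does not separately prove eventual liveness, instead going directly for the $3$-round bound), and it is correct: the set $B$ of honest parties that ever echo $\bot$ or hold an input $\ne v_i$ is bounded by $t$, the complementary set $A$ then forces everyone in $B$ to echo $\bot$, and counting shows every honest $W_m^k$ settles to $\{v_{i,k}\}$ and everyone proposes $v_i$, so $P_i$ outputs.

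The $3$-round bound, however, has a real gap. You split on the global, over-all-time predicate ``at most $t$ honest parties \emph{ever} echo $\bot$'' versus its complement, and in the complementary case you claim that every honest party collects $t+1$ $\bot$-echoes and outputs $(\bot,0)$ via a $V$-set within three rounds. Neither half of that claim is right. Consider $n=10$, $t=3$, $\ell=1$, honest $P_1,\dots,P_7$, byzantine $P_8,P_9,P_{10}$, with inputs $v_1=\dots=v_5=0$, $v_6=v_7=1$. Within one round $P_6,P_7$ echo $\bot$. The byzantine parties stay silent until round $10$, when $P_8,P_9$ multicast $\msgpair{\ECHO}{1}$; only at round $11$ do $P_1,\dots,P_5$ see the four distinct non-matching senders $P_6,P_7,P_8,P_9$ and echo $\bot$. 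So eventually all seven honest parties echo $\bot$ and you are in your ``complementary'' case --- but by round $2$ every honest $W_m^1$ is exactly $\{0\}$, everyone proposes $0$, and by round $3$ parties $P_1,\dots,P_5$ output $(0,1)$, not $(\bot,0)$, via the proposal path; the $V$-set path you invoke does not close until round $12$. The adversary can thus make ``how many honest parties ever echo $\bot$'' fall into whichever bucket it likes without affecting what happens in the first three rounds, so the case split cannot certify a time bound. The paper instead cases on the \emph{state at time} $T+2\Delta$: by then each $W_i^k$ already contains $b_k^*$, so at that moment either all $W_i^k$ are singletons (and proposals on $v^*$ go out by round $2$, arriving by round $3$), or some $W_i^k=\{0,1\}$ at round $2$, which means that for each bit $b$ at least $t+1$ honest parties sent $P_i$ a $\bot$-or-bit-$b$ echo by round $2$, and those same multicast echoes fill every $V_j^k$ by round $3$. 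You need a time-local dichotomy like that, not a property of the full execution, to control the round count.
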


\begin{proof}
    The simplest property is intrusion tolerance. It follows from the fact that if an honest party outputs $(v, 1)$ for any $v$, then $v$ is the party's input, which makes it an honest input.

    Agreement is also simple. Suppose there are two honest parties $P$ and $P'$ who respectively output $(v, 1)$ and $(v', 1)$. Then, $P$ has received $n - t$ proposals on $v$, while $P'$ has received $n - t$ proposals on $v'$. Hence, there are $n - 2t \geq t + 1$ parties, or at least one honest party, who have sent $P$ a proposal on $v$ and $P'$ a proposal on $v'$. This implies that $v = v'$ because an honest party can only propose one value. The reason why this is true is that an honest party $P_i$ only proposes a value when it adds a bit to one of its $W$ sets $W_i^1, \dots, W_i^\ell$ and after doing so observes that $|W_i^1| = \dots = |W_i^\ell| = 1$, which can happen only once.

    For validity, suppose the honest parties have a common input $v^* = b_1^* \| \dots \| b_\ell^* \in \{0,1\}^\ell$. In this case, they echo $v^*$ (multicast $\msgpair{\ECHO}{v^*}$), and they do not echo $\bot$ since they do not receive $t + 1$ echoes on values other than $v^*$. For all $k \in \{1, \dots, \ell\}$, since no honest party echoes $\bot$ or a string with the $k^{\text{th}}$ bit $1 - b_k^*$, the bit $b_k^*$ is the only bit that any honest party $P_i$ can add to $V_i^k$ or $W_i^k$. Consequently, an honest party can neither output $(\bot, 0)$ on Line \ref{line:gc1-vout} nor propose anything other than $v^*$. This means that an honest party can only output after receiving $n - t$ proposals on $v^*$, which leads to \mbox{the party outputting $v^*$ since it has the input $v^*$.}

    Finally, for liveness in 3 rounds, suppose that by some time $T$ every honest party acquires its input and thus echoes it. We show via case analysis that every honest party outputs by the time $T + 3\Delta$, where $\Delta$ is the maximum honest \mbox{message delay that the adversary causes.} \begin{enumerate}
        \item The easier case is if there is no input that $t + 1$ honest parties have. Then, by the time $T + \Delta$, every honest party receives $n - 2t \geq t + 1$ echo messages on inputs that are not its own, and therefore becomes able to output $\bot$.
        \item Suppose instead that some $v^* = b_1^* \| \dots \| b_\ell^* \in \{0,1\}^\ell$ is the input of $t + 1$ honest parties. By the time $T + \Delta$, all the honest parties with inputs other than $v^*$ receive $t + 1$ echoes on $v^*$ and thus echo $\bot$. Then, by the time $T + 2\Delta$, for each $k \in \{1, \dots, \ell\}$ each honest party $P_i$ receives $n - t \geq 2t + 1$ echoes that are on $\bot$ or $v^*$ (which has the $k^{\text{th}}$ bit $b_i^*$), and thus adds $b_i^*$ to $W_i^k$. So, by the time $T + 2\Delta$, either every honest party $P_i$ observes that $W_i^k = \{b_k^*\}$ for all $k$, or some honest party $P_i$ obtains $W_i^k = \{0, 1\}$. In the former case, the honest parties all propose $v^*$, which by the time $T + 3\Delta$ leads to them all receiving $n - t$ proposals on $v^*$ and thus becoming able to output $(v^*, 1)$ or $(\bot, 0)$. In the latter case, the party $P_i$ that has obtained $W_i^k = \{0, 1\}$ has by the time $T + 2\Delta$ for both bits $b \in \{0, 1\}$ received echo messages on $\bot$ or on strings with the $k^{\text{th}}$ bit $b$ from $n - t$ parties ($t + 1$ or more of which are honest). Hence, by the time $T + 3\Delta$, for both bits $b \in \{0, 1\}$ every honest party $P_j$ receives from $t + 1$ honest parties echoes on $\bot$ or on strings with the $k^{\text{th}}$ bit $b$, which means that $P_j$ obtains $V_j^k = \{0, 1\}$ \mbox{and becomes able to output $(\bot, 0)$.} \qedhere
    \end{enumerate}
\end{proof}

\subparagraph{Complexity.} The round complexity is $3$, as proven above. The message complexity is $\BO(n^2)$ as each party multicasts at most $3$ messages (an echo on $\bot$, an echo on its input and a proposal), \linebreak and the communication complexity is $\BO(\ell n^2)$ bits. Note that the communication is balanced evenly between the parties: Each party sends $\BO(n)$ messages and $\BO(\ell n)$ bits to the others.

\subsection{Grade Doubling}

We achieve grade-doubling with a \emph{proposal} protocol we call $\Prop$. In proposal, each party $P_i$ acquires an input $v_i \in \mathcal{M}$ in an input domain $\mathcal{M}$, and outputs a set $Y_i \subseteq M$ of size $1$ or $2$. To work properly, a proposal protocol requires there to exist a set $S$ of size $2$ (which the parties might not know in advance) such that the honest parties only acquire inputs in $S$. If there exists such a set $S$, a proposal protocol \mbox{achieves liveness with the following safety properties:}

\begin{itemize}
    \item \textbf{\emph{agreement:}} If $Y_i$ and $Y_j$ are honest output sets, then $Y_i \cap Y_j \neq \varnothing$.
    \item \textbf{\emph{validity:}} For every honest output set $Y_i$, every $y \in Y_i$ is an honest party's input.
\end{itemize}

With binary inputs, proposal is equivalent to binary 1-graded consensus. Suppose the parties run a proposal protocol with inputs in $\{0, 1\}$. If they run it with a common input bit $b^*$ then they all output $\{b^*\}$ from it, and otherwise for some bit $b^*$ they each output $\{b^*\}$ or $\{0, 1\}$ from it. These output guarantees are exactly those of binary 1-graded consensus when we map the proposal outputs $\{0\}, \{0, 1\}, \{1\}$ to the 1-graded consensus outputs $(0, 1), (\bot, 0), (1, 1)$. However, when there are more than two possible inputs, a binary 1-graded consensus protocol cannot be used, whereas a proposal protocol still works with output guarantees comparable to those of 1-graded consensus as long as the honest parties happen to run it with at most two inputs $a$ and $b$, even if the parties with the input $a$ do not know what $b$ is and vice versa.

Our 3-round proposal protocol $\Prop$ is based on the 4-round protocol $\mathsf{AProp}$ in \cite{dm24}, which is in turn based on the protocol $\Pi_{\mathsf{prop}}^{t_s}$ in \cite{bkl19}. Again, we shave off one round. While according to $\mathsf{AProp}$'s design a party $P_i$ would need to receive $2t + 1$ echoes on a value $v$ to add $v$ to its set $V_i$, below in $\Prop$ the party $P_i$ adds $v$ to $V_i$ after receiving $t + 1$ echoes on $v$. This change makes it easier for $P_i$ to add values to $V_i$, obtain $|V_i| = 2$ and output $V_i$.

Note that using proposal to double grades is not a new invention. The BinAA protocol in \cite{delphi} essentially uses proposal (therein called weak binary value broadcast) for this purpose.

\begin{algobox}[\textbf{ with the input }$v_i$]{Protocol $\mathsf{Prop}$}
    \State multicast $\msgpair{\ECHO}{v_i}$ if you haven't done so before
    \For{any $v$, when you have received echoes on $v$ from $t + 1$ parties}
        \State multicast $\msgpair{\ECHO}{v}$ if you haven't done so before
        \State $V_i \gets V_i \cup \{v\}$
        \If{$|V_i| = 2$}
            \State output $V_i$ if you haven't output before, and keep running
        \EndIf
    \EndFor
    \When{there is a first $v$ such that you've received echoes on $v$ from $2t + 1$ parties \nolinebreak}
        \State multicast $\msgpair{\PROP}{v}$
    \EndWhen
    \Upon{receiving proposals on any $v$ from $n - t$ parties}
        \State output $\{v\}$ if you haven't output before, and keep running
    \EndUpon
\end{algobox}

\begin{theorem}
    $\mathsf{Prop}$ is a secure 3-round proposal protocol with liveness.
\end{theorem}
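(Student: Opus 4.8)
The plan is to prove, in order, validity, agreement, and liveness within three rounds, all built on one lemma about where echoes come from. Fix the size-$2$ set $S$ containing every honest party's input. First I would show that whenever an honest party multicasts $\msgpair{\ECHO}{v}$, the value $v$ is some honest party's input (hence $v \in S$): the first honest party to echo $v$ did so either because $v$ was its own input, or because it had collected $t+1$ echoes on $v$, but the latter forces an earlier honest echo on $v$, a contradiction. Validity then follows directly from the thresholds in $\Prop$: a value enters some $V_i$ only after $t+1$ echoes on it, is proposed only after $2t+1$ echoes on it, and an output $\{v\}$ needs $n - t \ge t+1$ proposals on $v$; each of these forces at least one honest echo on the value in question, so every element of every honest output set is an honest input, and in particular every honest output set is a nonempty subset of $S$.

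For agreement, let $Y_i$ and $Y_j$ be honest output sets. If either of them has size $2$, then it equals $S$ (being a size-$2$ subset of the size-$2$ set $S$), and since the other is a nonempty subset of $S$ they intersect. If both have size $1$, then both were produced on the proposal branch, say $Y_i = \{v\}$ with $P_i$ having received $n - t$ proposals on $v$ and $Y_j = \{v'\}$ with $P_j$ having received $n - t$ proposals on $v'$; since $2(n-t) - n = n - 2t \ge t+1$, some honest party proposed both $v$ and $v'$, and as an honest party proposes at most once (it acts on ``the first $v$'' reaching $2t+1$ echoes) this gives $v = v'$.

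For liveness, suppose every honest party has acquired its input and echoed it by some time $T$, and let $\Delta$ be the maximum honest message delay; I would show each honest party outputs by $T + 3\Delta$. By time $T + \Delta$ every honest party has received the $n - t$ honest echoes, and since $|S| = 2$ and $n - t \ge 2t+1$, pigeonhole gives a value $c \in S$ echoed by at least $t+1$ honest parties, so by $T + \Delta$ every honest party has added $c$ to its $V_i$ and has itself echoed $c$, whence by $T + 2\Delta$ every honest party has $\ge 2t+1$ echoes on $c$ and multicasts a $\PROP$ message. If every honest party proposes $c$, then by $T + 3\Delta$ every honest party receives $n - t$ proposals on $c$ and outputs $\{c\}$ unless it has already output. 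Otherwise some honest party $Q$ proposes the other value $d \in S$; since $Q$ multicasts $\PROP$ by $T + 2\Delta$ on the first value for which it sees $2t+1$ echoes, it has $2t+1$ echoes on $d$ by $T + 2\Delta$, so at least $t+1$ honest parties have echoed $d$ by then, and therefore by $T + 3\Delta$ every honest party has added $d$ to its $V_i$; together with $c \in V_i$ this yields $|V_i| = 2$, so every honest party outputs its $V_i$. Either way all honest parties output within $3\Delta$.

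The main obstacle is the liveness case where honest proposals split between $c$ and $d$ so that no value alone collects $n - t$ proposals: the key is to notice that such a split forces $d$ to have been echoed by $t+1$ honest parties within two rounds, which pushes every honest party onto the $|V_i| = 2$ branch. This is also the place where using the weaker threshold of $t+1$ echoes (rather than $2t+1$, as in the four-round protocol $\mathsf{AProp}$ it is based on) to update $V_i$ is what lets $\Prop$ finish in three rounds; I would also double-check the routine points that the pigeonhole step needs only $n - t \ge 2t+1$ and that the ``echo/propose/output at most once'' guards keep the counting arguments sound.
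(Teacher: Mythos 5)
Your proof is correct and follows essentially the same structure as the paper's: you prove that honest echoes only carry honest inputs, derive validity from the echo/propose/output thresholds, reduce agreement to the size-one case and apply a quorum argument, and for liveness use pigeonhole to find a value $c$ echoed by $t+1$ honest parties, leading to all honest parties proposing by $T+2\Delta$ and then case-splitting on whether any honest party proposes a second value $d$ (forcing the $|V_i|=2$ branch) or all propose $c$ (forcing the proposal branch). The only cosmetic difference is that you make the ``size-2 output must equal $S$'' step explicit in the agreement proof, whereas the paper leaves it implicit.
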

\begin{proof}
Below, we assume that there exists a set $S = \{a, b\}$ such that the honest parties only acquire inputs in $S$. Without this, the agreement and liveness properties would be lost.

If an honest party echoes a value $v$, then $v$ must be some honest party's input. Otherwise, the first honest party that echoes $v$ would have to have received $t + 1$ echoes on $v$, at least one being from an honest party who contradictorily echoed $v$ earlier. For any value $v$ that is not any honest party's input, the fact that no honest party echoes $v$ means that the honest parties cannot output sets that contain $v$ \mbox{(i.e.\ that we have validity), due to the following:}\begin{enumerate}
    \item An honest party $P_i$ cannot add $v$ to its set $V_i$, as $P_i$ would need to receive $t + 1$ echoes on $v$ to do this. Therefore, if $P_i$ outputs $V_i$ when $|V_i| = 2$, then $v \not \in V_i$.
    \item An honest party $P_i$ cannot propose $v$, as $P_i$ would need to receive $2t + 1$ echoes on $v$ to do this. This means that an honest party $P_j$ cannot output $\{v\}$, as $P_j$ would have to receive $n - t \geq t + 1$ proposals on $v$ to do this.
\end{enumerate}

As for agreement, observe that by the validity property the honest parties can only output non-empty subsets of $S = \{a, b\}$, as their inputs are all in $S$. So, we only need to show that if some honest parties $P$ and $P'$ respectively output $\{v\}$ and $\{v'\}$ for some $v, v' \in S$, then $v = v'$. Again, this follows from a traditional quorum intersection argument. If we have such parties $P$ and $P'$, then $P$ has received $n - t$ proposals on $v$, while $P'$ has received $n - t$ proposals on $v'$. So, there are $n - 2t \geq t + 1$ parties, or at least one honest party, who have sent $P$ a proposal on $v$ and $P'$ a proposal on $v'$. \mbox{This implies $v = v'$ as an honest party can only propose once.}

It remains to show liveness in 3 rounds. As we did for $\GC_1$, let us suppose that by some time $T$ every honest party has acquired its input $\{a, b\} \in S$ and thus echoed it. Let us show that every honest party outputs by the time $T + 3\Delta$, where $\Delta$ is the maximum honest message delay that the adversary causes. Since there exists some $v \in S$ that is at least $\ceil{\frac{n - t}{2}} \geq t + 1$ honest parties' input, we know that by the time $T + \Delta$ every honest party receives $t + 1$ echoes on $v$ and thus echoes $v$ even if $v$ is not its input. Then, by the time $T + 2\Delta$, every honest party receives $n - t \geq 2t + 1$ echoes on $v_1$. Finally, this is followed by one of the two cases below, in both of which the honest parties \mbox{all output by the time $T + 3\Delta$.} \begin{enumerate}
    \item It can be the case that no honest party proposes any $v' \neq v$ by the time $T + 2\Delta$. Then, by the time $T + 2\Delta$ every honest party proposes $v$ after receiving $2t + 1$ echoes on $v$. Consequently, by the time $T + 3\Delta$, every honest party receives $n - t$ proposals on $v$, which allows the party to output $v$ if it did not output earlier.
    \item It can be the case that some honest party $P_i$ proposes some $v' \neq v$ by the time $T + 2\Delta$. Then, at the time $T + 2\Delta$ the party $P_i$ must have received $2t + 1$ echoes on $v'$, with at least $t + 1$ of these echoes being from honest parties. Since both $v \in S$ and $v' \in S$ get echoed by at least $t + 1$ honest parties by the time $T + 2\Delta$, we conclude that by the time $T + 3\Delta$ every honest party $P_j$ gains the ability to output $V_j = \{v, v'\} = S$ after it receives $t + 1$ echoes on both $v$ and $v'$ and thus adds \mbox{both $v$ and $v'$ to $V_j$.} \qedhere
\end{enumerate}

\subparagraph{Complexity.} The round complexity is $3$, as proven above. The message complexity is $\BO(n^2)$ as each party multicasts at most $3$ messages (two echoes and one proposal), with each honest party echo/proposal carrying an honest party's input. Again, the communication is balanced.

\end{proof}

With $\mathsf{Prop}$, it is simple to grade-double $\GC_{2^{k-1}}$ into $\GC_{2^{k}}$. All we need to do is sequentially compose $\GC_{2^{k-1}}$ and $\mathsf{Prop}$, and interpret $\mathsf{Prop}$ outputs as $\GC_{2^k}$ outputs.

\begin{algobox}[\textbf{ with the input }$v_i$]{Protocol $\mathsf{GC_{2^k}}$ when $k \geq 1$}
    \State join a common instance of $\GC_{2^{k - 1}}$ with the input $v_i$
    \State join a common instance of $\mathsf{Prop}$
    \Upon{outputting $(y, g)$ from $\GC_{2^{k-1}}$}
        \State let your $\Prop$ input be $(y, g)$
    \EndUpon
    \Upon{outputting $Y$ from $\mathsf{Prop}$}
        \State if $Y = \{(y, j)\}$ for some $y$ and some $j \geq 0$, then output $(y, 2j)$
        \State if $Y = \{(y, j), (y', j + 1)\}$ for some $y$, $y'$ and some $j \geq 0$, then output $(y', 2j+1)$
    \EndUpon
\end{algobox}

\begin{theorem}
    For all $k \geq 0$, $\GC_{2^k}$ is a secure $2^k$-graded consensus protocol with liveness.
\end{theorem}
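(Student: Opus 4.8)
The plan is to induct on $k$. The base case $k = 0$ is immediate: $\GC_{2^0} = \GC_1$, which the preceding theorem shows is a secure ($3$-round) $1$-graded consensus protocol with liveness. For the inductive step, I would assume $\GC_{2^{k-1}}$ is a secure $2^{k-1}$-graded consensus protocol with liveness (carrying along, if round counts are wanted, that it runs in $3k$ rounds), and argue that $\GC_{2^k}$ — which runs $\GC_{2^{k-1}}$ and feeds its output into $\Prop$, then relabels — is a secure $2^k$-graded consensus protocol with liveness (in $3k+3$ rounds).

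The central first step is to pin down the shape of the honest $\GC_{2^{k-1}}$ outputs, since these become the honest $\Prop$ inputs and $\Prop$'s guarantees require the honest inputs to lie in a set $S$ of size (at most) $2$. From the agreement property of $\GC_{2^{k-1}}$ — honest grades within $1$ of each other, and equal values whenever both grades are $\geq 1$ — together with its intrusion tolerance, I would derive that there are pairs $p_0 = (y_0, g_0)$ and $p_1 = (y_1, g_0 + 1)$, with $y_1$ an honest $\GC_{2^{k-1}}$ input, such that every honest $\GC_{2^{k-1}}$ output lies in $\{p_0, p_1\}$, and moreover $y_0 = y_1$ if $g_0 \geq 1$ while $y_0 = \bot$ if $g_0 = 0$. (This is exactly the observation, already noted around Figure~\ref{gradedfig}, that the honest graded-consensus outputs occupy two adjacent vertices of the graded-consensus tree; if all honest outputs coincide, choose $p_1$ arbitrarily so that $|S| = 2$.) Thus $\Prop$ runs with honest inputs in $S = \{p_0, p_1\}$, so its liveness, agreement and validity all apply.

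Next I would read off the three properties of $\GC_{2^k}$, using the output map $\{p_0\} \mapsto (y_0, 2g_0)$, $\{p_1\} \mapsto (y_1, 2g_0 + 2)$, $S = \{p_0,p_1\} \mapsto (y_1, 2g_0 + 1)$, where a size-$2$ $\Prop$ output must be $S$ itself by validity, and the map is unambiguous since the two grades in $S$ differ. For \textbf{agreement}: by $\Prop$ agreement the honest $\Prop$ output sets pairwise intersect, so they cannot include both $\{p_0\}$ and $\{p_1\}$; hence the honest $\GC_{2^k}$ outputs are confined either to $\{(y_0, 2g_0),(y_1,2g_0+1)\}$ or to $\{(y_1,2g_0+2),(y_1,2g_0+1)\}$. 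In either case consecutive grades differ by exactly $1$, and when the smaller grade is $\geq 1$ — which forces $g_0 \geq 1$ and hence $y_0 = y_1$ — the output values coincide. For \textbf{intrusion tolerance}: a non-$(\bot,0)$ output $(z,h)$ comes from a $\Prop$ output set one of whose elements is a non-$(\bot,0)$ honest $\GC_{2^{k-1}}$ output $(z,\cdot)$ (by $\Prop$ validity), so $z$ is an honest $\GC_{2^{k-1}}$ input by that protocol's intrusion tolerance, and honest $\GC_{2^{k-1}}$ and $\GC_{2^k}$ inputs coincide. For \textbf{validity}: a common input $m$ makes every honest $\GC_{2^{k-1}}$ output $(m, 2^{k-1})$, hence every honest $\Prop$ input equals $(m, 2^{k-1})$, hence every honest $\Prop$ output is the singleton $\{(m,2^{k-1})\}$, which maps to $(m, 2^k)$. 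Liveness (and the round count) is routine: $\GC_{2^{k-1}}$ terminates, $\Prop$ then runs on inputs in $S$ and terminates, and outputting from $\Prop$ yields a $\GC_{2^k}$ output.

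The main obstacle is this first step — correctly extracting the size-$2$ set $S$ with its internal structure (the value-equality/grade-$0$ dichotomy, and $y_1$ being an honest input) from $\GC_{2^{k-1}}$'s agreement and intrusion tolerance. Everything after that is a mechanical case check over which of $\{p_0\}, \{p_1\}, S$ occur as honest $\Prop$ outputs; the only care needed there is the grade arithmetic $2g_0, 2g_0+1, 2g_0+2$ and the edge case $g_0 = 0$, where $y_0 = \bot$ but is only ever output with grade $0$.
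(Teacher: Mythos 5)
Your proof is correct and follows essentially the same route as the paper's: induct on $k$, observe by $\GC_{2^{k-1}}$'s agreement and intrusion tolerance that the honest $\Prop$ inputs lie in a two-element set $\{(v',g),(v,g+1)\}$ of the required shape, and then read off all four $2^k$-graded properties from $\Prop$'s agreement/validity/liveness through the relabelling map. You spell out the case analysis on the $\Prop$ outputs and the grade arithmetic more explicitly than the paper does, but the key idea and its decomposition are the same.
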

\begin{proof}
    Naturally, this theorem follows by induction on $k$. The base case $k = 0$ is the protocol $\GC_1$, which we have already proven secure. Below, we consider $k \geq 1$, where we obtain $\GC_{2^k}$ by sequentially composing $\GC_{2^{k-1}}$ and $\Prop$.

    The honest parties begin by running $\GC_{2^{k-1}}$ with their inputs. So, for some honest input $v^*$ and some grade $g \in \{0, \dots, 2^{k-1} - 1\}$ they each output either $(v',g)$ or $(v^*,g+1)$ from $\GC_{2^{k-1}}$, where $v' = \bot$ if $g = 0$ and $v' = v^*$ if $g \geq 1$. Afterwards, they run $\Prop$ with inputs in $\{(v',g), (v^*,g+1)\}$, and therefore either they all output $\{(v', g)\}$ or $\{(v', g), (v^*, g + 1)\}$ from $\Prop$ and thus all output $(v', 2g)$ or $(v^*, 2g+1)$ from $\GC_{2^k}$, or they all output $\{(v', g), (v^*, g + 1)\}$ or $\{(v^*, g + 1)\}$ from $\Prop$ and thus all output $(v^*, 2g+1)$ or $(v^*, 2g+2)$ from $\GC_{2^k}$. Either way, $\GC_{2^k}$ achieves liveness, intrusion tolerance and agreement.

    If the honest parties run $\GC_{2^k}$ with a common input $v^*$, then they all run $\GC_{2^{k-1}}$ with the input $v^*$, output $(v^*, 2^{k-1})$ from it, run $\Prop$ with the input $(v^*, 2^{k-1})$, output $\{(v^*, 2^{k-1})\}$ from it, and finally output $(v^*, 2^k)$ from $\GC_{2^k}$. Hence, $\GC_{2^k}$ achieves validity as well. \qedhere
\end{proof}

\subparagraph*{Complexity.} As $\GC_{2^k}$ consists of one 3-round $\GC_1$ instance followed by $k$ sequential 3-round $\Prop$ iterations, $\GC_{2^k}$'s round complexity is $3k + 3$ and its message complexity is $\BO(kn^2)$. When the parties run $\GC_{2^k}$ with $\ell$-bit inputs, they run $\GC_1$ with inputs in $\{0, 1\}^\ell$, and so $\GC_1$ costs $\BO(\ell n^2)$ bits of communication. Then, for each $j \in \{1, \dots, k\}$ they run the $j^{\text{th}}$ $\Prop$ iteration with inputs that are $2^{j-1}$-graded consensus outputs in $\{0, 1\}^\ell \times \{1, \dots, 2^{j-1}\} \cup \{(\bot, 0)\}$, of size $\BO(\ell + \log j) = \BO(\ell + \log k)$, and so the $j^{\text{th}}$ $\Prop$ iteration costs $\BO(n^2(\ell + \log k))$ bits of communication. Overall, we see that $\GC_{2^k}$'s communication complexity is $\BO(kn^2(\ell + \log k))$ bits, with the $\log k$ term here also covering the iteration IDs for the $\Prop$ iterations. Note that the communication \mbox{in $\GC_{2^k}$ is balanced, as both $\GC_1$ and $\Prop$ have balanced communication.}

\section{\texorpdfstring{The Termination Protocol $\Term$}{The Termination Protocol Term}} \label{termAppendix}

\begin{algobox}{Protocol $\Term$}
    \State $y_i \gets \bot$ \Comment{Here, $\bot$ is a placeholder value than cannot be a valid $\Term$ input.}
    \For{any $v$, if you acquire the input $v$ or receive $\msgpair{\ECHO}{v}$ from $t + 1$ parties}
        \State if $y_i = \bot$, then $y_i \gets v$
        \State multicast $\msgpair{\ECHO}{v}$
    \EndFor

    \When{you have received $\READY$ from $t + 1$ parties or there exists some $v$ such that you have received $\msgpair{\ECHO}{v}$ from $2t + 1$ parties}
        \State multicast $\READY$
    \EndWhen

    \When{you have received $\READY$ from $2t + 1$ parties, you have multicast $\READY$ and you have set $y_i \neq \bot$}
        \State output $y_i$ and terminate
    \EndWhen
\end{algobox}

\termSec

\begin{proof}
    Suppose an honest party $P_i$ outputs some $v$ from $\Term$. Then, $P_i$ has received $t + 1$ echoes on $v$, at least one of which is honest. An honest party only echoes a value if the value is its input or it receives $t + 1$ echoes on the value (at least one of which is honest), which means that the first honest party who echoed $v$ did so because its input is $v$. So, we see that the honest parties can only output honest party inputs from $\Term$.

    As for termination in $3$ rounds, let $\Delta$ be the maximum honest message delay the adversary causes. We first prove a fact we will use later: If every honest party multicasts $\READY$ by some time $T$, then every honest party terminates by the time $T + \Delta$. To see why, suppose every honest party multicasts $\READY$ by some time $T$. At the time $T$, the first honest party who has multicast $\READY$ has done so because it has received $2t + 1$ echoes on some $v$, at least $t + 1$ of which are honest. So, by the time $T + \Delta$ every honest party $P_i$ receives $t + 1$ honest echoes on $v$, which allows $P_i$ to set $y_i \gets v$ if $y_i = \bot$. Moreover, since every honest party multicasts $\READY$ by the time $T$, every honest party $P_i$ receives $n - t \geq 2t + 1$ honest $\READY$ messages by the time $T + \Delta$, \mbox{which allows $P_i$ to terminate $\Term$ by the time $T + \Delta$ with an output $y_i \neq \bot$.}

    Now, let us show that if an honest party $P_i$ terminates by some time $T$, then every honest party terminates by the the time $T + 2\Delta$. This is because if $P_i$ terminates by the time $T$, it does so after receiving $\READY$ from $2t + 1$ parties, at least $t + 1$ of which are honest. Since at least $t + 1$ honest parties multicast $\READY$ by the time $T$, every honest party receives $t + 1$ honest $\READY$ messages and therefore multicasts $\READY$ by the time $T + \Delta$, and consequently every honest party terminates by the time $T + 2\Delta$.

    Finally, suppose that by some time $T$ every honest party acquires an input. We want to show that every honest party terminates by the time $T + 3\Delta$. If some honest party terminates by the time $T + \Delta$, then this follows from what we have proven above; so, suppose no honest party terminates by the time $T + \Delta$. By the time $T$, every honest party echoes its input in $\{y, y'\}$, and so there exists some $y^* \in \{y, y'\}$ that at least $\ceil{\frac{n - t}{2}} \geq t + 1$ honest parties echo. By the time $T + \Delta$, every honest party receives $t + 1$ echoes on $y^*$ before terminating, and thus echoes $y^*$ even if it has not acquired the input $y^*$. By the time $T + 2\Delta$, every honest party receives $n - t \geq 2t + 1$ echoes on $y^*$ and thus multicasts $\READY$. Finally by the time $T + 3\Delta$, every honest party terminates $\Term$. \qedhere
\end{proof}

\section{Skipped Proofs}

\expsecurity

\begin{proof}
    To prove the theorem, we pick any arbitrary $q \geq 1$, and by induction prove for each $j \in \{q - 1, q - 2, \dots, 0\}$ (in this order) that $\Exp_j$ is a protocol for edge agreement in the path $(2^j - 1, \dots, 2^q - 1)$ that takes at most $6(j + 1)$ rounds if run with the common input $2^j - 1$ and at most $7 + 12q - 6j$ rounds otherwise. 
    
    Note that for any $j \geq 0$, if no honest party obtains the grade $0$ from $\GC_2$ in $\Exp_j$, then no honest party multicasts $\CENTER$, and thus no honest party outputs $2^{j +1} - 1$ on Line \ref{line:exp-center-out} upon receiving $\CENTER$ from $t + 1$ parties. We use this fact in the rest of the proof below.

    The induction's base case is $q = j + 1 \geq 1$, where each honest party $P_i$ runs $\Exp_{j}$ with an input $v_i \in \{2^j - 1, \dots, 2^{j + 1} - 1\}$. In this base case, the honest parties all input $\LEFT$ to $\GC_2$, and thus all output $(\LEFT, 2)$ from it. Then, each honest party $P_i$ sets $\mathsf{side} = \LEFT$, sets $v_i^{\sfnext} = \min(v_i, 2^{j + 1} - 1) = v_i$, runs $\TC((2^j - 1, \dots, 2^{j + 1} - 1))$ with the other parties with the input $v_i^{\sfnext} = v_i$, and finally lets its $\TC$ output be its final output. So, $\Exp_j$'s security follows from $\TC$'s security, and $\Exp_j$ takes at most $6 + 6j + 1 = 1 + 12q - 6j < 7 + 12q - 6j$ rounds: $6$ rounds due to $\GC_2$ and $6j + 1$ rounds due to $\TC$ (since $\TC$ is run on a path of length $2^j$). However, if the honest parties have a common input, then $\TC$ only takes $6j$ rounds, and thus $\Exp_j$ takes $6(j + 1)$ rounds instead of $6(j + 1) + 1$.

    Now, let us consider the cases that might arise when the honest parties run $\Exp_j$ with inputs in $\{2^j - 1, \dots, 2^{q} - 1\}$ for any $j \in \{q-2, q-3, \dots, 0\}$, with the inductive assumption that $\Exp_{j + 1}$ provides edge agreement in $(2^{j + 1} - 1, 2^q - 1)$ in at most $6(j + 2)$ rounds when run with the common input $2^{j + 1} - 1$ and at most $7 + 12q - 6(j + 1)$ rounds otherwise.\begin{enumerate}
        \item It could be the case that every honest input is less than $2^{j + 1}$. This case is identical to the base case above; the honest parties reach edge agreement in at most $6(j+1)$ rounds if they have a common input and in at most $6(j+1) + 1 < 7 + 12q - 6j$ rounds otherwise.
        \item It could be the case that every honest input is at least $2^{j + 1}$. Then, the honest parties all run $\GC_2$ with the input $\RIGHT$, and thus output $(\RIGHT, 2)$ from it. Then, each honest party $P_i$ sets $\mathsf{side} = \RIGHT$, sets $v_i^{\sfnext} = \max(v_i, 2^{j+1} - 1) = v_i$, runs $\Exp_{j + 1}$ with the other parties for edge agreement in the path $(2^{j + 1} - 1, \dots, 2^q - 1)$ with the $\Exp_{j + 1}$ input $v_i^{\sfnext} = v_i$, and finally lets its $\Exp_{j + 1}$ output be its final output. Hence, $\Exp_j$'s security follows from $\Exp_{j+1}$'s security, and $\Exp_j$ takes at most $7 + 12q - 6j$ rounds: $6$ rounds due to the initial $\GC_2$, and \mbox{$\max(6(j+2), 7 + 12q - 6(j + 1)) = 7 + 12q - 6(j + 1)$ rounds due to $\Exp_{j+1}$.} The last inequality here follows from the fact that $q \geq j + 2$.
        \item The most challenging case is when some honest inputs are below $2^{j + 1}$, while others or not. In this case, the parties run $\GC_2$ with different inputs, and so for some $k \in \{\LEFT, \RIGHT\}$ they either all output in $(k, 2)$ or $(k, 1)$ from $\GC_2$, or all output $(k, 1)$ or $(\bot, 0)$ from $\GC_2$. Below, we say that the corresponding protocol for the graded consensus value $k = \LEFT$ is $\TC((2^j - 1, \dots, 2^{j + 1} - 1))$, while the corresponding protocol for $k = \RIGHT$ is $\Exp_{j + 1}$. \begin{alphaenumerate}
            \item The case where the parties output $(k, 2)$ or $(k, 1)$ from $\GC_2$ for some $k \in \{\LEFT, \RIGHT\}$ is like the two cases we considered above where they all output $(k, 2)$ from $\GC_2$, except for that some honest parties run the corresponding protocol with the input $v_i^{\sfnext} = v_i$ while some others run it with the input $v_i^{\sfnext} = 2^{j+1} - 1$. As some honest inputs are below $2^{j+1} - 1$ while others or not, $2^{j+1} - 1$ is in the convex hull of the honest inputs, which means that some honest parties switching their inputs to it does not impact convex validity. So, the honest parties reach edge agreement in at most $7 + 12q - 6j$ rounds by running $\GC_2$ in $6$ rounds and then running the corresponding protocol in at most \mbox{$\max(6j + 1, \max(6(j+2), 7 + 12q - 6(j + 1))) = 7 + 12q - 6(j + 1)$ rounds.}
            \item If the honest parties all output $(k, 1)$ or $(k, \bot)$ from $\GC_2$ for some $k \in \{\LEFT, \RIGHT\}$, then every honest party $P_i$ sets $v_i^{\sfnext} = 2^{j + 1} - 1$, which ensures that the honest parties can only run $\TC$ or $\Exp_{j+1}$ with the input $2^{j + 1} - 1$ and thus only output $2^{j + 1} - 1$ from them. Hence, an honest party that outputs from $\Exp_j$ can only output the valid value $2^{j + 1} - 1$, no matter if it outputs on Line \ref{line:exp-out-bot}, \ref{line:exp-out-le}, \ref{line:exp-out-ri} or \ref{line:exp-center-out}. What remains to show is liveness. Observe that since $n - t \geq 2t + 1$, there are either $t + 1$ honest parties that output $(\bot, 0)$ from $\GC_2$, or $t + 1$ honest parties that output \nolinebreak $(k, 1)$ \nolinebreak from \nolinebreak $\GC_2$. \begin{romanenumerate}
                \item In the former case, the $t + 1$ honest parties that output $(\bot, 0)$ from $\GC_2$ multicast $\CENTER$, and so, after one round (i.e.\ $7 \leq 7 + 12q - 6j$ rounds after $\Exp_j$ begins) every honest party becomes able to output $2^{j + 1} - 1$ on Line \ref{line:exp-center-out}.
                \item In the latter case, the $t + 1$ honest parties that output $(k, 1)$ from $\GC_2$ multicast $k \in \{\LEFT, \RIGHT\}$ before running the corresponding protocol, and this allows the honest parties that output $(\bot, 0)$ from $\GC_2$ to also set $\mathsf{side} = k$ and run the corresponding protocol, albeit with one extra round of delay compared to the parties that output $(k, 1)$ from $\GC_2$. Therefore, every honest party that does not output $2^{j + 1} - 1$ on Line \ref{line:exp-center-out} eventually outputs $2^{j + 1} - 1$ by outputting this from the corresponding protocol. The corresponding protocol (which the honest parties run with the common input $2^{j+1} - 1$) is either $\TC$, which takes at most $6j$ rounds, or $\Exp_{j + 1}$, which takes at most $6(j+2)$ rounds. So, in total, every honest party outputs in at most $6 + 1 + 6(j + 2) \leq 7 + 12q - 6j$ rounds. Here, the first $6$ rounds are due to $\GC_2$, the following $7^{\text{th}}$ round is due to the honest parties that output $(\bot, 0)$ from $\GC_2$ needing one extra round to learn $k$, and the final $6(j + 2) \leq 12q - 6j$ rounds are due to the corresponding protocol. \qedhere
            \end{romanenumerate}
        \end{alphaenumerate}
    \end{enumerate}
\end{proof}

\dexpsecurity

\begin{proof}
    Let $V = \{\floor{\log_2(v_i + 1)} : P_i \text{ is honest}\}$, let $q_1 = \min V$, and let $q_2 = \max V$. Observe that $v_i \in \{2^{q_1} - 1, \dots, 2^{q_2 + 1} - 1\}$ for every honest party $P_i$ and that $q_2 = \floor{\log _ 2 (M + 1)}$ if the maximum honest input is $M$. The honest parties run $\Agn$ with inputs in $\{5q_1, \dots, 5q_2\}$, and so they output adjacent integers in $\{5q_1, \dots, 5q_2\}$ from it, in at most $f(5q_2)$ rounds.

    If $q_1 = q_2$, then the honest parties all run $\Agn$ with the common input $5q_2$, and thus they all output $5q_2$ from it. Consequently, each honest party $P_i$ sets $(k_i, r_i) = (q_2, 0)$, and obtains its final output from an instance of $\TC((2^{q_2} - 1, \dots, 2^{q_2 + 1} - 1))$ which it runs with the input $v_i^{\sfnext} = \min(\max(v_i, 2^{q_2} - 1), 2^{q_2 + 1} - 1) = v_i \in \{2^{q_2} - 1, \dots, 2^{q_2 + 1} - 1\}$. Edge agreement thus follows from $\TC$, in $6q_2 + 1$ more rounds.

    For the rest of the proof, we assume $q_1 < q_2$. Observe that the boundary values $q_1$ and $q_2$ arise from some honest inputs $v_i, v_j$ being such that $\floor{\log_2(v_i + 1)} = q_1$ and $\floor{\log_2(v_j + 1)} = q_2$. These equalities respectively imply $v_i < 2^{q_1 + 1} - 1$ and $v_j \geq 2^{q_2} - 1$, which means that every integer in $\{2^{q_1 + 1} - 1, \dots, 2^{q_2} - 1\}$ is in the convex hull of the honest parties' inputs.

    When an honest party $P_i$ outputs $5k_i + r_i$ from $\Agn$, it lets $v_i^{\sfnext} = \min(\max(v_i, 2^{k_i} - 1), 2^{k_i + 1} - 1)$ if $r_i = 0$, and lets $v_i^{\sfnext} = 2^{k_i + 1} - 1$ otherwise. Let us show that in every possible case $P_i$ chooses a valid $v_i^{\sfnext}$; that is, $P_i$ sets $v_i^{\sfnext}$ to a value that is in the convex hull of the honest inputs. If $r_i = 0$, then there are three cases: If $v_i < 2^{k_i} - 1$ then $v_i^{\sfnext} = 2^{k_i} - 1$ is valid as $2^{k_i} - 1$ is between the valid values $v_i$ and $2^{q_2} - 1$, if $v_i > 2^{k_i + 1} - 1$ then $v_i^{\sfnext} = 2^{k_i + 1} - 1$ is valid as $2^{k_i + 1} - 1$ is between the valid values $2^{q_1 + 1} - 1$ and $v_i$, and finally if $2^{k_i} - 1 \leq v_i \leq 2^{k_i + 1} - 1$ then $v_i^{\sfnext} = v_i$ is valid as well. Meanwhile, if $r_i > 0$, then by $5q_1 \leq 5k_i + r_i \leq 5q_2$ we have $q_1 \leq k_i \leq q_2 - 1$, which makes $v_i^{\sfnext} = 2^{k_i + 1} - 1 \in \{2^{q_1 + 1} - 1, \dots, 2^{q_2} - 1\}$ valid.
    
    Convex validity follows from the fact that every honest party $P_i$ either directly outputs the valid value $2^{k_i + 1} - 1$ or obtains its output from a $\TC$ instance for which the honest parties can only acquire valid inputs. Note that the honest parties do not run $\TC$ with out-of-range inputs: The value $2^{k_i + 1} - 1$ which a party $P_i$ inputs to $\TC((2^{k_i} - 1, \dots, 2^{k_i + 1} - 1))$ if $r_i \in \{1, 2\}$ or $\TC((2^{k_i + 1} - 1, \dots, 2^{k_i + 2} - 1))$ if $r_i \in \{3, 4\}$ is in range for both $\TC((2^{k_i} - 1, \dots, 2^{k_i + 1} - 1))$ and $\TC((2^{k_i + 1} - 1, \dots, 2^{k_i + 1} - 2))$, and the value $\min(\max(v_i, 2^{k_i} - 1), 2^{k_i + 1} - 1)$ which a party $P_i$ inputs to $\TC((2^{k_i} - 1, \dots, 2^{k_i + 1} - 1))$ if $r_i = 0$ is in-range as well.

    The initial $\Agn$ guarantees that there exists some $v = 5k + r$ where $q_1 \leq k \leq q_2 - 1$ and $0 \leq r \leq 4$ such that every honest party outputs either $v$ or $v + 1$ from it. Based on this, let us prove the edge agreement and liveness properties of $\TStep(\Agn)$ via case analysis on $r$. \begin{enumerate}
        \item If $r = 0$, then the honest parties all run $\TC((2^k - 1, \dots, 2^{k+1} - 1))$ with in-range valid inputs, and obtain their final outputs from it. Edge agreement in $\mathbb{N}$ thus follows from $\TC((2^k - 1, \dots, 2^{k+1} - 1))$, in $6k + 1$ more rounds.
        \item If $r = 1$, then the honest parties run $\TC((2^k - 1, \dots, 2^{k+1} - 1))$ with the common input $2^{k+1} - 1$, and thus all output $2^{k+1} - 1$ from it in $6k$ rounds. The honest parties $P_i$ that have $r_i = 1$ output $2^{k + 1} - 1$ from $\TStep(\Agn)$ after they output $2^{k + 1} - 1$ from $\TC$, while those that have $r_i = 2$ output $2^{k + 1} - 1$ directly after they output from \nolinebreak $\Agn$.
        \item If $r = 2$, then every honest party $P_i$ has $r_i \in \{2, 3\}$, which leads to it directly outputting $2^{k + 1} - 1$ after it outputs from $\Agn$. The fact that the honest parties $P_i$ with $r_i = 2$ run $\TC((2^k - 1, \dots, 2^{k+1} - 1))$ while those with $r_i = 3$ run $\TC((2^{k+1} - 1, \dots, 2^{k+2} - 1))$ is not an issue as none of the honest parties care about their $\TC$ outputs.
        \item If $r = 3$, then the honest parties run $\TC((2^{k+1} - 1, \dots, 2^{k+2} - 1))$ with the common input $2^{k+1} - 1$, and thus all output $2^{k+1} - 1$ from it in $6(k+1)$ rounds. The honest parties $P_i$ that have $r_i = 4$ output $2^{k + 1} - 1$ from $\TStep(\Agn)$ after they output $2^{k + 1} - 1$ from $\TC$, while those that have $r_i = 3$ output $2^{k + 1} - 1$ directly after they output from $\Agn$.
        \item If $r = 4$, then every honest party $P_i$ (both if $(k_i, r_i) = (k, r)$ and if $(k_i, r_i) = (k + 1, 0)$) runs $\TC((2^{k+1} - 1, \dots, 2^{k+2} - 1))$ with an in-range valid input, and obtains its final output from it. Edge agreement thus follows from $\TC((2^{k+1} - 1, \dots, 2^{k+2} - 1))$, in $6(k+1)+1$ more rounds.
    \end{enumerate} 
Note that in all the cases above, $\TStep(\Agn)$ takes at most $6q_2+1$ rounds after $\Agn$. Hence, $\TStep(\Agn)$ takes at most $f(5q_2) + 6q_2+1$ rounds. This is the round complexity we wanted to prove since $q_2 = \ceil{\log_2(M + 1)}$.
\end{proof}

\agntoagz

\begin{proof}
    If the honest parties all have inputs in $\mathbb{N}$, then every honest party $P_i$ runs $\GC_2$ with the input $1$, outputs $(1, 2)$ from $\GC_2$, lets $v_i^{\sfnext} = \max(0, (2 - 1)(1)v_i) = v_i$, runs $\Agn$ with the input $v_i^{\sfnext} = v_i$, and finally outputs $1y = y$ from $\Agz(\Agn)$ when it outputs $y$ from $\Agn$. Thus, $\Agz(\Agn)$' security follows from $\Agn$'s security.

    Similarly, if the honest parties all have negative inputs, then every honest party $P_i$ runs $\GC_2$ with the input $-1$, outputs $(-1, 2)$ from $\GC_2$, lets $v_i^{\sfnext} = \max(0, (2 - 1)(-1)v_i) = -v_i$, runs $\Agn$ with the input $v_i^{\sfnext} = -v_i$, and finally outputs $-1y = -y$ from $\Agz(\Agn)$ when it outputs $y$ from $\Agn$. Again, $\Agz(\Agn)$'s security follows from $\Agn$'s security. This is because $\Agn$ becomes an edge agreement protocol for $(\dots, -1, 0)$ when the parties mirror (sign-flip) their inputs and outputs, by symmetry.

    Finally, there is the case where some honest parties have negative inputs while others do not. In this case, the honest parties run $\GC_2$ with different inputs, and so for some $k \in \{1, -1\}$ they either all output $(k, 2)$ or $(k, 1)$ from $\GC_2$ or all output $(k, 1)$ or $(k, \bot)$ from $\GC_2$. The case where they all output $(k, 2)$ or $(k, 1)$ from $\GC_2$ is similar to the cases above, though with some honest parties $P_i$ running $\Agn$ with the input $v_i^{\sfnext} = 0kv_i = 0$ rather than $v_i^{\sfnext} = 1kv_i = kv_i$. As $0$ is in the convex hull of the honest inputs, this does not impact validity, and the honest parties safely reach edge agreement via $\Agn$, with input/output mirroring if $k = -1$ and without mirroring if $k = 1$. Meanwhile, if the honest parties all output $(k, 1)$ or $(\bot, 0)$ from $\GC_2$, then they all run $\Agn$ with the input $0$. So, the honest parties with the $\GC_2$ output $(k, 1)$ output $k0 = 0$ from $\Agz(\Agn)$ after outputting $0$ from $\Agn$, while those with the $\GC_2$ output $(\bot, 0)$ output $0$ \mbox{from $\Agz(\Agn)$ directly without caring about their $\Agn$ outputs.} \qedhere   
\end{proof}

\end{document}